\documentclass[a4paper]{article}
\usepackage{lmodern}
\pdfoutput=1
\usepackage{graphicx,wrapfig}
\usepackage{lipsum}
\usepackage{amsmath,amssymb,mathrsfs}
\usepackage{hyperref}
\usepackage{url}
\usepackage{mathtools}
\usepackage{changepage}
\usepackage[ruled,vlined]{algorithm2e}
\usepackage{amsmath}
\usepackage{multirow}
\usepackage{wrapfig}
\usepackage{subfig}
\usepackage{color}
\usepackage{epsfig,enumerate,amsmath,amsfonts,amssymb,amsthm,mathrsfs,ifpdf}
\usepackage{indentfirst,relsize}
\usepackage{setspace,graphicx}
\usepackage{latexsym}
\usepackage[all]{xy}
\usepackage[dvipsnames]{pstricks}
\usepackage{pst-grad} % For gradients
\usepackage{pst-plot} % For axes
\usepackage[margin = 2.50cm]{geometry}

\usepackage{authblk} % to add author affiliations
\usepackage{algpseudocode}
\usepackage{color,soul}
\usepackage{enumitem}% http://ctan.org/pkg/enumitem

\usepackage{orcidlink}% for orcid id
\usepackage{tabularx}
\usepackage{booktabs,tabularx,array}
\usepackage{tikz}
\usetikzlibrary{positioning}
\usetikzlibrary{positioning,arrows.meta}

\usepackage[T1]{fontenc}

\newcommand{\remove}[1]{}

\newtheorem{theorem}{Theorem}%[section]
\newtheorem{lemma}[theorem]{Lemma}

\newtheorem{claim}[theorem]{Claim}

\newtheorem{observation}[theorem]{Observation}

\newcommand{\MEOP}{\textnormal{\textsc{Maximum Edge Open Packing}}}
\newcommand{\MIS}{\textnormal{\textsc{Maximum Independent Set}}}

\newcommand{\rhooe}{\rho^{\mathrm{o}}_{\mathrm{e}}}

\usepackage{todonotes}
\usepackage[most]{tcolorbox}

\title{Maximum Edge Open Packing on AT-Free, Chordal, and Convex Bipartite Graphs}
\author[1]{Gautam K. Das\footnote{gkd@iitg.ac.in}}
\author[1]{Kamal Santra \orcidlink{0009-0006-5997-1452} \footnote{kamal.7.2013@gmail.com, kamal.santra@iitg.ac.in}}
\affil[1]{Department of Mathematics\\
	
	Indian Institute of Technology Guwahati\\
	
	Guwahati, 781039, Assam, India}

\date{}

\begin{document}

	\maketitle
\begin{abstract}
An edge open packing is a set of edges whose endpoints induce a disjoint union of stars. We study the problem of finding a maximum edge open packing in AT-free, chordal, and convex bipartite graphs. For AT-free graphs, we use the oriented star-conflict graph $A_G$, introduced by Das and Santra, for which $\rho_e^o(G)=\alpha(A_G)$, and prove that $A_G$ is AT-free whenever $G$ is AT-free. Consequently, a known maximum independent set algorithm for AT-free graphs yields an $O(n^2+m^4)$-time algorithm for Maximum Edge Open Packing, where $n$ and $m$ denote the numbers of vertices and edges of $G$, respectively. For chordal graphs, we develop an $O(n^4)$-time dynamic programming algorithm over a nice tree decomposition derived from a clique tree, exploiting the structural fact that the endpoint set of an edge open packing intersects every clique in at most two vertices. Finally, for convex bipartite graphs, we obtain an $O(n^4)$-time dynamic programming algorithm based on two boundary indices that separate consecutive star components. This result extends the previously known algorithm for biconvex bipartite graphs and, in particular, provides an improved explicit running-time bound for that class.
	
\noindent\textbf{Keywords:}
Edge open packing; Induced matching; AT-free graphs; Chordal graphs; 
Convex bipartite graphs; Maximum independent set; Dynamic programming
\end{abstract}

\section{Introduction}

An induced matching selects edges whose endpoints induce a matching. It is
a standard model for choosing mutually isolated interactions and has been
studied extensively in graph algorithms~\cite{Cameron1989,Lozin2002,DabrowskiEtAl2013}.
Edge open packing, introduced by Chelladurai et al.~\cite{ChelladuraiEtAl2022},
allows several selected edges to share a common centre while retaining
separation between different components. For example, all $t$ edges of a
star $K_{1,t}$ form an edge open packing, whereas an induced matching
contains at most one of them.

One motivation comes from channel assignment under a secondary-interference
model in packet radio networks~\cite{ChelladuraiEtAl2022,CardosoEtAl2019}.
In this model, two selected links conflict when a third link joins their
endpoints. A maximum edge open packing is therefore a largest feasible set of links
for one channel, and the colour classes of an injective edge colouring are
edge open packings. The model permits several links with a shared star
centre. More abstractly, it describes the selection of mutually isolated
groups of interactions, each organised around one hub.

Formally, let $G=(V,E)$ be a graph. Two distinct edges have a \emph{common
edge} if an edge other than those two joins an endpoint of one to an endpoint
of the other. An \emph{edge open packing} is a set $D\subseteq E$ in which
no two distinct edges have a common edge. Its maximum cardinality is the
\emph{edge open packing number}, denoted by $\rhooe(G)$, and the optimisation
problem of finding such a set is called \MEOP. If $V(D)$ denotes the endpoints
of the edges in $D$, then $D$ is an edge open packing precisely when the
connected components of $G[V(D)]$ are stars, including single edges.
Thus, the problem extends induced matching by permitting larger star
components.

Bre\v{s}ar and Samadi~\cite{BresarSamadi2024} proved that the decision version
is NP-complete on restricted classes including planar graphs of maximum
degree four and Eulerian bipartite graphs, and gave a linear-time algorithm
for trees. Structural work relates edge open packing to induced matching,
diameter, minimum degree, and clique number~\cite{BresarEtAl2025,ChelladuraiEtAl2022,PandeySantra2025}.
Santra~\cite{Santra2025Chordal} obtained polynomial-time algorithms for proper
interval, block, and split graphs. Bhyravarapu et al.~\cite{BhyravarapuEtAl2026} gave polynomial-time algorithms for
distance-hereditary and biconvex bipartite graphs, with an explicit
$O(n^{42})$ bound for the latter class. They also obtained a
$2^{O(\omega)}\operatorname{poly}(n)$-time algorithm on chordal graphs,
where $\omega$ is the clique number.

 Das and Santra~\cite{DasSantra2026} introduced the \emph{oriented star-conflict graph} $A_G$ and established the identity $\rhooe(G)=\alpha(A_G)$. The vertices of $A_G$ correspond to oriented edges of $G$, where the first endpoint is designated as the centre of a star and the second as one of its leaves; the precise adjacency rule is recalled in Section~\ref{sec:atfree}. Using this construction, they obtained $O(n^2+m^2)$-time algorithms for permutation and interval graphs, where $n=|V(G)|$ and $m=|E(G)|$; in terms of $n$ alone, this gives an $O(n^4)$ bound~\cite{DasSantra2026}. They treated well-partitioned chordal graphs separately.

More recently, Das and Santra~\cite{DasSantraCactusCocomp2026} used the same oriented star-conflict graph to study \textsc{Maximum Edge Open Packing} on cocomparability graphs. They proved that if $G$ is a cocomparability graph, then $A_G$ is also a cocomparability graph. Since maximum independent set is polynomial-time solvable on cocomparability graphs, this yields an $O(n^2+m^2)$-time algorithm, and hence an $O(n^4)$-time algorithm in terms of $n$ alone~\cite{DasSantraCactusCocomp2026}. This result simultaneously generalizes the earlier algorithms for permutation and interval graphs, since both classes are contained in the class of cocomparability graphs.

The present work takes this structural approach one step further. We prove that the transformation $G\mapsto A_G$ preserves AT-freeness: if $G$ is AT-free, then so is $A_G$. Since cocomparability graphs form a subclass of AT-free graphs, this extends the oriented star-conflict framework to a strictly broader setting. The graph $A_G$ has $2m$ vertices and can be constructed in $O(n^2+m^2)$ time. Applying the $O(N^4)$-time maximum independent set algorithm of Broersma et al.~\cite{BroersmaEtAl1999} to $A_G$, where $N=|V(A_G)|=2m$, gives an overall running time of $O(n^2+m^4)$, which is $O(n^8)$ when expressed only in terms of $n$. The progression of these results and the corresponding graph-class inclusions are illustrated in Figure~\ref{fig:eop-class-hierarchy}.

\begin{figure}[htbp]
	\centering
	\begin{tikzpicture}[
		class/.style={
			draw,
			rounded corners=2pt,
			align=center,
			minimum width=4.2cm,
			minimum height=1.05cm,
			inner sep=5pt
		},
		edge/.style={->,>=Stealth,thick},
		node distance=1.25cm and 1.4cm
		]
		
		\node[class] (atfree)
		{\textbf{AT-free graphs}\\
			\small $O(n^2+m^4)\leq O(n^8)$\\
			\scriptsize Present work};
		
		\node[class, below=of atfree] (cocomp)
		{\textbf{Cocomparability graphs}\\
			\small $O(n^2+m^2)\leq O(n^4)$\\
			\scriptsize Das and Santra~\cite{DasSantraCactusCocomp2026}};
		
		\node[class, below left=of cocomp] (perm)
		{\textbf{Permutation graphs}\\
			\small $O(n^2+m^2)\leq O(n^4)$\\
			\scriptsize Das and Santra~\cite{DasSantra2026}};
		
		\node[class, below right=of cocomp] (interval)
		{\textbf{Interval graphs}\\
			\small $O(n^2+m^2)\leq O(n^4)$\\
			\scriptsize Das and Santra~\cite{DasSantra2026}};
		
		\draw[edge] (perm) -- (cocomp);
		\draw[edge] (interval) -- (cocomp);
		\draw[edge] (cocomp) -- (atfree);
		
	\end{tikzpicture}
	\caption{Graph-class hierarchy for the oriented star-conflict approach to
		\textsc{Maximum Edge Open Packing}. An arrow from $\mathcal{C}_1$ to
		$\mathcal{C}_2$ indicates that $\mathcal{C}_1\subseteq\mathcal{C}_2$.
		The displayed bounds are the corresponding running times for computing
		a maximum edge open packing.}
	\label{fig:eop-class-hierarchy}
\end{figure}

\medskip
\noindent
\textbf{Our contributions.}
We give polynomial-time algorithms for three graph classes. First, we prove
that $A_G$ is AT-free whenever $G$ is AT-free. Combining this preservation
theorem with the correspondence of Das and Santra and the independent-set
algorithm of Broersma et al.~\cite{BroersmaEtAl1999} yields an
$O(n^2+m^4)$-time algorithm for \MEOP, where $n=|V(G)|$ and $m=|E(G)|$.

Second, we give an $O(n^4)$-time algorithm for chordal graphs, without a
bound on the clique number. We use a standard nice tree decomposition
obtained from a clique tree~\cite{Kloks1994}. The endpoint set of an edge
open packing meets each clique in at most two vertices. Recording these
vertices and their roles in a partially constructed star gives quadratically
many states per bag and a constant number of label combinations for each
active subset at a join.

Third, we give an $O(n^4)$-time algorithm for convex bipartite graphs.
Convexity orders the connected components of a solution from left to right.
Two boundary indices record all restrictions on the next component, allowing
a dynamic program that appends one star at a time. This extends the earlier
biconvex result to a larger class and improves its explicit running-time
bound. These extensions are complementary: bipartite permutation graphs
are AT-free and biconvex, but biconvex graphs need not be AT-free.
Section~\ref{sec:preliminaries} gives the inclusion relations and a small
example separating the two classes.

\section{Preliminaries}
\label{sec:preliminaries}

All graphs are finite, simple, and undirected. For a graph $G$, write
$n=|V(G)|$ and $m=|E(G)|$. The open and closed neighbourhoods of a vertex
$v$ are $N_G(v)$ and $N_G[v]$, respectively. For $U\subseteq V(G)$,
$G[U]$ denotes the subgraph induced by $U$. An independent set consists of
pairwise nonadjacent vertices, and $\alpha(G)$ is its maximum size.
For $D\subseteq E(G)$, let $V(D)$ be the set of endpoints of its edges.
Throughout the paper, a \emph{star} means $K_{1,t}$ with $t\geq1$.
In particular, $K_2$ is a star, and either endpoint may be chosen as its
centre. A star with at least two edges has a unique centre.

We use the edge open packing definition from the introduction. The following
characterisation also specifies exactly which edges are selected.

\begin{observation}
\label{obs:star-characterization}
An edge set $D\subseteq E(G)$ is an edge open packing if and only if every
connected component of $G[V(D)]$ is a star. In this case,
$E(G[V(D)])=D$.
\end{observation}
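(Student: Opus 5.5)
We prove the two directions separately, and in the forward direction we also show $E(G[V(D)])=D$. The central fact we use is the definition: two distinct edges $e,f\in D$ have a common edge exactly when some third edge $g\notin\{e,f\}$ joins an endpoint of $e$ to an endpoint of $f$.

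\emph{Forward direction.} Let $D$ be an edge open packing. We first show $E(G[V(D)])=D$. Take any edge $g=xy$ of $G[V(D)]$. Then $x$ lies on some $e\in D$ and $y$ lies on some $f\in D$. If $e=f$, both endpoints of $g$ are the endpoints of $e$, so $g=e\in D$ because $G$ is simple. If $e\neq f$ and $g\notin\{e,f\}$, then $g$ is a common edge of $e$ and $f$, which is impossible. Hence $g\in\{e,f\}\subseteq D$. Consequently each component of $G[V(D)]$ is spanned by edges of $D$.

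Next, suppose a component is not a star. A connected graph on at least two vertices that is not a star contains either a path $P_4$ or a triangle, both as subgraphs using its edges. Its edges all lie in $D$.
\begin{itemize}
\item For a path $a\,b\,c\,d$, the edge $bc$ is a common edge of $ab$ and $cd$, since $ab\neq cd$ and $bc\notin\{ab,cd\}$.
\item For a triangle $abc$, the edge $bc$ is a common edge of $ab$ and $ac$.
\end{itemize}
Either case is a contradiction, so every component is a star. Isolated vertices cannot occur, since every vertex of $V(D)$ lies on an edge of $D$.

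\emph{Backward direction.} Suppose every component of $G[V(D)]$ is a star, and let $e\neq f$ be edges of $D$ with a common edge $g$. All endpoints involved lie in $V(D)$, so $e,f,g$ are edges of $G[V(D)]$. They lie in a single component, because $g$ joins $e$ to $f$. Then $e,f,g$ are three distinct edges of one star in which $g$ touches both $e$ and $f$.

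In a star, all edges share the centre. Any two distinct edges meet exactly at the centre, and leaves are pairwise nonadjacent. The edge $g$ joins an endpoint $u$ of $e$ to an endpoint $w$ of $f$, and $g$ has the centre as one end.
\begin{itemize}
\item If $u$ is the centre, then $w$ is a leaf of $f$. So $g=\text{centre}\,w=f$, a contradiction.
\item The case where $w$ is the centre is symmetric and gives $g=e$.
\item If both $u$ and $w$ are leaves, they are nonadjacent, so $g$ cannot join them.
\end{itemize}
Stars with a single edge, where either end may be taken as the centre, are covered because there all three edges would coincide. Hence $D$ is an edge open packing, and the first part of the forward argument again gives $E(G[V(D)])=D$.

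The main care point is the structural step "connected and not a star implies a $P_4$ or a triangle". This is standard: take a longest path; if it has at least four vertices we are done. Otherwise the graph has diameter at most $2$ and contains no $P_4$, so it is a star or contains a triangle.
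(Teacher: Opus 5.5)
Your proof is correct and follows essentially the same route as the paper. In the forward direction you show that every edge of $G[V(D)]$ is forced into $D$, and then a $P_4$ or a triangle yields a common edge, exactly as the paper does. The only variation is in the converse: the paper first uses the leaves to show that every star edge lies in $D$, whereas you rule out a common edge directly by the centre/leaf case analysis inside one star component and then recover $E(G[V(D)])=D$ from the forward direction, which makes explicit a check the paper leaves implicit.
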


\begin{proof}
If $D$ is an edge open packing, an edge in $E(G[V(D)])\setminus D$ would
join two distinct selected edges incident with its endpoints. Hence
$E(G[V(D)])=D$. This graph contains neither a triangle nor a path with
three edges as a subgraph: in either case one edge is common to two others.
A connected graph with an edge and neither of these subgraphs is a star.
Conversely, if every component of $G[V(D)]$ is a star, each leaf belongs
to $V(D)$, so its unique incident edge must belong to $D$. Thus all star
edges are selected, and no third edge joins two of them in the forbidden
way.
\end{proof}

A graph is \emph{chordal} if it has no induced cycle of length at least four.
Its maximal cliques admit a \emph{clique tree}: for each vertex $v$, the
maximal cliques containing $v$ induce a connected subtree~\cite{FulkersonGross1965,Golumbic1980}.
An \emph{asteroidal triple} is an independent set $\{u,v,w\}$ such that
each pair is joined by a path avoiding the closed neighbourhood of the
third vertex. A graph is \emph{AT-free} if it contains no asteroidal triple.

A bipartite graph $G=(X\cup Y,E)$ is \emph{convex bipartite} if one part,
say $X$, admits an ordering $x_1,\ldots,x_p$ in which every neighbourhood
$N(y)$, $y\in Y$, is consecutive. After deleting isolated vertices, write
$N(y)=\{x_{\ell(y)},\ldots,x_{r(y)}\}$ with
$1\leq\ell(y)\leq r(y)\leq p$. A convex ordering can be found by testing
the consecutive-ones property of the bipartite adjacency matrix~\cite{BoothLueker1976}.
A graph is \emph{biconvex bipartite} if both parts admit such orderings.
A \emph{bipartite permutation graph} is a bipartite graph whose vertices
can be represented by line segments between two parallel lines, with
adjacency corresponding to intersection. A bipartite graph is
\emph{chordal bipartite} if it has no induced cycle of length at least six.

The standard strict inclusion chain is bipartite permutation graphs
$\subsetneq$ biconvex bipartite graphs $\subsetneq$ convex bipartite graphs
$\subsetneq$ chordal bipartite graphs~\cite{BrandstadtLeSpinrad1999}.
Bipartite permutation graphs are AT-free, but biconvex graphs need not be.
For example, take a cycle $a b c d a$ and add pendant vertices
$a',b',c',d'$ at the corresponding cycle vertices. The two part orders
$(b',a,c,d')$ and $(a',b,d,c')$ show that this graph is biconvex,
whereas $\{a',b',c'\}$ is an asteroidal triple. Consequently, our AT-free
result does not subsume the earlier biconvex result. A claw with each edge
subdivided once is convex but not biconvex: order its three subdivision
vertices on one side to obtain convexity; on the other side, making each
pair consisting of the original centre and an original leaf consecutive
would require the centre to be consecutive with three different vertices.

\section{Maximum Edge Open Packing in AT-Free Graphs}
\label{sec:atfree}

We use the oriented star-conflict graph of Das and
Santra~\cite{DasSantra2026} to reduce \MEOP\ on an AT-free graph to \MIS\
on another AT-free graph.

\subsection{The oriented star-conflict graph}

We recall the construction from~\cite{DasSantra2026}. An \emph{oriented
edge} of $G$ is an ordered pair $(c,\ell)$ such that $c\ell\in E(G)$.
The \emph{oriented star-conflict graph} $A_G$ has all oriented edges of $G$
as its vertices. Two distinct oriented edges $(c,\ell)$ and $(d,r)$ are
\emph{compatible} if one of the following conditions holds.

\begin{enumerate}[label=(\roman*)]
	\item They have the same centre: $c=d$, $\ell\neq r$, and
	$\ell r\notin E(G)$.
	\item The vertices $c,\ell,d,r$ are all distinct, and none of the four
	pairs $cd,cr,\ell d,\ell r$ is an edge of $G$.
\end{enumerate}

Two vertices of $A_G$ are adjacent if and only if they are incompatible.
In particular, $(c,\ell)$ and $(\ell,c)$ are adjacent, so an independent set
cannot select both orientations of one edge. The first coordinate specifies
the centre of a selected star and the second specifies a leaf; for a
single-edge component, either endpoint may serve as its centre.

\begin{lemma}[\cite{DasSantra2026}]
	\label{lem:eop-independent}
	For every graph $G$, $\rhooe(G)=\alpha(A_G)$.
\end{lemma}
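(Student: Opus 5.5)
The plan is to prove both inequalities by exhibiting explicit maps between edge open packings of $G$ and independent sets of $A_G$ that preserve cardinality. Throughout, Observation~\ref{obs:star-characterization} serves as the working definition: $D$ is an edge open packing exactly when every component of $G[V(D)]$ is a star, and then $E(G[V(D)])=D$.

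\textbf{Step 1 ($\rhooe(G)\le\alpha(A_G)$).} Let $D$ be a maximum edge open packing. By Observation~\ref{obs:star-characterization}, $G[V(D)]$ is a disjoint union of stars whose edge set is exactly $D$.
\begin{itemize}
\item In each star, fix a centre $c$; for a $K_2$, choose either endpoint.
\item Orient every edge of that star as $(c,\ell)$. This gives a set $I$ of $|D|$ oriented edges.
\item Two oriented edges in the same star share the centre $c$, and their leaves $\ell\neq r$ are nonadjacent, because $G[V(D)]$ is a star. So condition (i) holds.
\item Two oriented edges in different components have four distinct endpoints, and no edges join them, since they lie in different components of the induced subgraph. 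So condition (ii) holds.
\end{itemize}
Hence $I$ is independent in $A_G$.

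\textbf{Step 2 ($\alpha(A_G)\le\rhooe(G)$).} Let $I$ be independent in $A_G$ and let $D=\{c\ell : (c,\ell)\in I\}$. Since $(c,\ell)$ and $(\ell,c)$ are adjacent in $A_G$, distinct elements of $I$ give distinct edges, so $|D|=|I|$. It remains to show that no two edges of $D$ have a common edge. Take $(c,\ell),(d,r)\in I$.
\begin{itemize}
\item If condition (ii) holds, the four endpoints are distinct and there are no cross edges. So no edge joins an endpoint of one to an endpoint of the other.
\item If condition (i) holds, the edges are $c\ell$ and $cr$. Any edge joining an endpoint of one to an endpoint of the other, other than these two, must be $\ell r$, which is excluded.
\end{itemize}
In both cases the pair has no common edge, so $D$ is an edge open packing.

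\textbf{Main obstacle.} The step I would be most careful with is the case analysis in Step 2. The compatibility conditions are pairwise, while the definition of an edge open packing is also pairwise, so the pairwise check should suffice. The delicate part is making sure a vertex cannot be a centre in one pair and a leaf in another. Suppose, for example, $(c,\ell)$ and $(\ell,x)$ both lie in $I$. Then they share a vertex but not the centre, so neither (i) nor (ii) applies; they are incompatible and hence adjacent, and cannot both lie in $I$. Similarly, $(c,\ell)$ and $(d,\ell)$ with $c\neq d$ share the leaf, so (ii) fails and (i) fails, making them adjacent as well. Thus every pair sharing a vertex must share exactly its centre, which is why checking the two conditions (i) and (ii) covers all cases.
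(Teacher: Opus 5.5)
Your proof is correct and follows the same route as the paper. For one inequality you orient each star of $G[V(D)]$ away from its centre; for the other you use the adjacency of $(c,\ell)$ and $(\ell,c)$ to get $|D|=|I|$. The only difference is cosmetic: you check the pairwise definition of an edge open packing directly from conditions (i) and (ii), whereas the paper groups $I$ by first coordinate and reads off a star forest, implicitly invoking Observation~\ref{obs:star-characterization}.
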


\begin{proof}
	For completeness, let $D$ be an edge open packing. By
	Observation~\ref{obs:star-characterization}, the connected components of
	$G[V(D)]$ are stars. Orient their edges from centre to leaf, choosing either
	centre for each $K_2$. The resulting $|D|$ oriented edges satisfy the two
	compatibility rules and form an independent set in $A_G$.
	Conversely, an independent set $I$ contains at most one orientation of each
	edge. Its members with a common first coordinate describe a star with
	pairwise nonadjacent leaves. Those with different first coordinates have
	disjoint endpoint sets with no edge between them. Thus their $|I|$
	underlying edges form an edge open packing, proving the equality.
\end{proof}

For example, let $G=C_5$ with vertices $v_0,\ldots,v_4$ in cyclic order,
and read subscripts modulo five. Write $a_i=(v_i,v_{i-1})$ and
$b_i=(v_i,v_{i+1})$. The pair $a_i,b_i$ is compatible for each $i$.
Oriented edges with different centres are incompatible, since $C_5$ has
no induced $2K_2$. Hence $\overline{A_G}=5K_2$, or equivalently
$A_G=K_{2,2,2,2,2}$, with parts $\{a_i,b_i\}$.
In particular, $\rhooe(C_5)=\alpha(A_{C_5})=2$.
Figure~\ref{fig:cycle-conflict} shows the complement to display the
compatibility pairs without drawing all forty conflict edges.

\begin{figure}[htb]
\centering
\begin{tikzpicture}[every node/.style={font=\small}]
  \coordinate (v0) at (90:1);
  \coordinate (v1) at (18:1);
  \coordinate (v2) at (-54:1);
  \coordinate (v3) at (-126:1);
  \coordinate (v4) at (162:1);
  \draw (v0)--(v1)--(v2)--(v3)--(v4)--(v0);
  \draw[line width=1.5pt,blue!65!black] (v4)--(v0)--(v1);
  \node[circle,fill,inner sep=1.8pt,label=above:$v_0$] at (v0) {};
  \node[circle,fill,inner sep=1.8pt,label=right:$v_1$] at (v1) {};
  \node[circle,fill,inner sep=1.8pt,label=below right:$v_2$] at (v2) {};
  \node[circle,fill,inner sep=1.8pt,label=below left:$v_3$] at (v3) {};
  \node[circle,fill,inner sep=1.8pt,label=left:$v_4$] at (v4) {};
  \node at (0,-1.65) {(a) $G=C_5$};
  \foreach \i in {0,...,4} {
    \coordinate (a\i) at ({3.2+0.9*\i},0.65);
    \coordinate (b\i) at ({3.2+0.9*\i},-0.65);
    \draw (a\i)--(b\i);
    \node[circle,fill,inner sep=1.8pt,label=above:$a_{\i}$] at (a\i) {};
    \node[circle,fill,inner sep=1.8pt,label=below:$b_{\i}$] at (b\i) {};
  }
  \draw[line width=1.5pt,blue!65!black] (a0)--(b0);
  \node at (5,-1.65) {(b) $\overline{A_G}=5K_2$};
\end{tikzpicture}
\caption{The cycle $C_5$ and the complement of its oriented star-conflict
graph. Each edge on the right joins compatible orientations; all other
pairs are adjacent in $A_G$. The highlighted pair $a_0,b_0$ represents
the highlighted maximum edge open packing on the left.}
\label{fig:cycle-conflict}
\end{figure}

\subsection{Preservation of AT-freeness}

\begin{theorem}
	\label{thm:at-preservation}
	If $G$ is AT-free, then $A_G$ is AT-free.
\end{theorem}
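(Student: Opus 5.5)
We argue by contradiction. Suppose $G$ is AT-free but $A_G$ has an asteroidal triple $\{x,y,z\}$ with $x=(c_1,\ell_1)$, $y=(c_2,\ell_2)$ and $z=(c_3,\ell_3)$. In $A_G$, non-adjacency is compatibility. So $x,y,z$ are pairwise compatible, and there is a path $P_{xy}$ in $A_G$ from $x$ to $y$ all of whose vertices are compatible with $z$; similarly $P_{yz}$ avoids $N[x]$ and $P_{xz}$ avoids $N[y]$. The aim is to project these three paths to $G$ and extract an asteroidal triple of $G$.

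\textbf{Projection of a path.} Two consecutive vertices $(a,b)$ and $(a',b')$ of a path in $A_G$ are incompatible. This means that either $\{a,b\}\cap\{a',b'\}\neq\emptyset$, or some edge of $G$ joins $\{a,b\}$ to $\{a',b'\}$. (The same-centre case with $bb'\in E(G)$ is the second alternative.) Hence the union of the underlying edges of the path, together with these connecting edges, contains a walk in $G$ between any endpoint of the first oriented edge and any endpoint of the last. Being compatible with $z=(c_3,\ell_3)$ means one of two things. Either the oriented edge is vertex-disjoint from $\{c_3,\ell_3\}$ with no edges to it, or it has centre $c_3$ and a leaf distinct from and non-adjacent to $\ell_3$. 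So the projected walk avoids $N_G[\ell_3]$, except possibly at the vertex $c_3$ itself.

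\textbf{Case split by the centre pattern of $x,y,z$.} There are three patterns.
\begin{enumerate}[label=(\alph*)]
  \item All centres are distinct. Then the six endpoints are pairwise distinct and there are no edges between different pairs. Every path vertex compatible with $z$ has the form $(c_3,r)$ or is far from $\{c_3,\ell_3\}$. A first reduction is to show that a shortest $P_{xy}$ can be assumed to avoid oriented edges centred at $c_3$. Such a vertex, say $(c_3,r)$, is adjacent to its path neighbours only through $c_3$ or $r$, and $c_3$ is far from $x$ and $y$. So the portion of the walk through $c_3$ should be replaceable, or else it yields a shorter $G$-walk avoiding $N[\ell_3]$ that uses $r$. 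Once all three walks avoid the relevant closed neighbourhoods, the triple $\{\ell_1,\ell_2,\ell_3\}$ (or a suitable choice of centres or leaves) is independent in $G$ and asteroidal, which is a contradiction.
  \item Exactly two centres coincide, say $c_1=c_2=c$. Then $\ell_1\ell_2\notin E(G)$, and $z$ is far from $\{c,\ell_1,\ell_2\}$. Here I would take the triple $\{\ell_1,\ell_2,\ell_3\}$ or $\{\ell_1,\ell_2,c_3\}$ in $G$. The walks from $\ell_1$ and $\ell_2$ to $z$'s endpoints come from $P_{xz}$ and $P_{yz}$. The walk $\ell_1\to\ell_2$ avoiding $N[\ell_3]$ needs care: it cannot simply use $\ell_1 c\,\ell_2$, because the relevant restriction is the neighbourhood of $z$'s endpoints, not of $c$; but $c$ is far from $z$, so $\ell_1 c\,\ell_2$ does in fact work.
  \item All three centres coincide, $c_1=c_2=c_3=c$. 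Then $\ell_1,\ell_2,\ell_3$ are pairwise non-adjacent, and the natural candidate is the triple $\{\ell_1,\ell_2,\ell_3\}$ in $G$. The projected walk from $\ell_1$ to $\ell_2$ avoids $N[\ell_3]$ except possibly at $c$, and $c\in N[\ell_3]$.
\end{enumerate}

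\textbf{Main obstacle.} The hardest part is the walks passing through $c_3$, most acutely in case (c). There the path $P_{xy}$ might use only oriented edges centred at $c$, for example $x=(c,\ell_1)$, then $(c,r)$, then $y=(c,\ell_2)$ with $r$ adjacent to $\ell_1$. In that situation there is no $G$-walk from $\ell_1$ to $\ell_2$ avoiding $N[\ell_3]$, because every such vertex of $P_{xy}$ is tied to $c$. I would handle this by exploiting that all three paths must avoid the other closed neighbourhoods simultaneously. Concretely, I would show that if $P_{xy}$ stays inside the star at $c$, then the leaves it uses, together with $\ell_3$, already give an asteroidal triple in $G[N(c)]$, or else contradict the compatibility constraints. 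If $P_{xy}$ leaves the star at $c$, its segment outside the star is a genuine $G$-walk avoiding $N[\ell_3]$, and I would concatenate such segments. If this direct shortcutting fails, the fallback is to replace the leaf triple by a triple mixing leaves and vertices met on the paths, chosen as the last vertices of the walks before re-entering $N[c]$.
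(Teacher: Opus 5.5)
Your projection step is sound, and you correctly locate the only difficulty: the projected walk avoids $N_G[\ell_k]$ except possibly at the centre $c_k$ of the member $(c_k,\ell_k)$ being avoided. But the proposal never gets past this point. The missing idea is to project each path vertex $(c_k,b)$ to $\{b\}$ alone, and every other path vertex $(a,b)$ to $\{a,b\}$. You then check that consecutive projections still intersect or are joined by an edge of $G$. This follows from compatibility of every path vertex with $(c_k,\ell_k)$:
\begin{itemize}
\item If $(c_k,b)$ is adjacent on the path to $(d,r)$ with $d\neq c_k$, then compatibility of $(d,r)$ with $(c_k,\ell_k)$ gives $c_k\notin\{d,r\}$ and $c_kd,c_kr\notin E(G)$. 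So the incompatibility of $(c_k,b)$ and $(d,r)$ must be witnessed at $b$: either $b\in\{d,r\}$, or $b$ is adjacent to $d$ or $r$.
\item If $d=c_k$, two oriented edges with a common centre are incompatible only when their leaves coincide or are adjacent.
\item Compatibility of $(c_k,b)$ with $(c_k,\ell_k)$ gives $b\notin N_G[\ell_k]$.
\end{itemize}
The union of the projections is therefore connected, contains $\ell_i$ and $\ell_j$, and misses $N_G[\ell_k]$. This works uniformly for every centre pattern. So $\{\ell_1,\ell_2,\ell_3\}$ is always the asteroidal triple, and your case split is unnecessary.

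Your specific attempts to get around the centre go wrong.
\begin{itemize}
\item The example in your main-obstacle paragraph refutes itself. For $(c,\ell_1),(c,r),(c,\ell_2)$ to be a path in $A_G$ you need $r\ell_1,r\ell_2\in E(G)$. Also $r\notin N_G[\ell_3]$, because $(c,r)$ is compatible with $(c,\ell_3)$. So $\ell_1r\ell_2$ is exactly the required walk.
\item In case (a), the reduction that a shortest $P_{xy}$ avoids oriented edges centred at $c_3$ fails in general. A $G$-neighbour $r$ of $c_3$ can lie on a path avoiding $N_{A_G}[z]$ only inside $(c_3,r)$. For instance, take edges $c_1\ell_1,c_2\ell_2,c_3\ell_3,c_3r,r\ell_1,r\ell_2$ and paths $\ell_1a_1a_2\ell_3$ and $\ell_2b_1b_2\ell_3$, with $x=(c_1,\ell_1)$, $y=(c_2,\ell_2)$, $z=(c_3,\ell_3)$. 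Then $\{x,y,z\}$ is asteroidal in $A_G$, yet $(c_3,r)$ lies on every $x$--$y$ path avoiding $N_{A_G}[z]$.
\item In case (b), the walks obtained from $P_{xz}$ and $P_{yz}$ start at $(c,\ell_1)$ and $(c,\ell_2)$. Their common centre $c$ lies in the closed neighbourhood to be avoided, and you do not address this.
\item The fallback ideas (an asteroidal triple in $G[N(c)]$, or mixing leaves with vertices met on the paths) are not arguments.
\end{itemize}
As written, the central step is unproved: producing an $\ell_i$--$\ell_j$ path in $G$ that avoids $N_G[\ell_k]$.
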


\begin{proof}
	We prove the contrapositive. Suppose that $A_G$ contains an asteroidal triple
	$\{e_1,e_2,e_3\}$, where $e_i=(c_i,\ell_i)$ for $i\in\{1,2,3\}$. Since the
	three vertices are pairwise nonadjacent in $A_G$, the corresponding oriented
	edges are pairwise compatible. Therefore, $\ell_1,\ell_2,\ell_3$ are distinct
	and pairwise nonadjacent in $G$.
	
	We show that $\{\ell_1,\ell_2,\ell_3\}$ is an asteroidal triple in $G$. Fix
	distinct $i,j,k\in\{1,2,3\}$. Since $\{e_1,e_2,e_3\}$ is an asteroidal triple
	in $A_G$, there is an $e_i$--$e_j$ path in $A_G$ avoiding
	$N_{A_G}[e_k]$. Write this path as
	$P=e_i=f_0,f_1,\ldots,f_t=e_j$, where $f_s=(a_s,b_s)$, and write
	$e_k=(c_k,\ell_k)$. For $0\leq s\leq t$, put $R_s=\{b_s\}$ if
	$a_s=c_k$, and $R_s=\{a_s,b_s\}$ otherwise.
	
	Every vertex $f_s$ of $P$ is compatible with $e_k$. If $a_s=c_k$, then
	compatibility of $(c_k,b_s)$ and $(c_k,\ell_k)$ gives
	$b_s\neq\ell_k$ and $b_s\ell_k\notin E(G)$. If $a_s\neq c_k$, compatibility
	gives $a_s,b_s\notin N_G[\ell_k]$. Hence,
	$R_s\cap N_G[\ell_k]=\varnothing$ for every $s$.
	
	We next claim that consecutive sets $R_s$ and $R_{s+1}$ touch in $G$; that
	is, they intersect or an edge of $G$ has one endpoint in each set. Suppose
	otherwise. Write $f_s=(a,b)$ and $f_{s+1}=(d,r)$. If $a=d=c_k$, then
	$R_s=\{b\}$ and $R_{s+1}=\{r\}$; the assumption gives $b\neq r$ and
	$br\notin E(G)$, so the two oriented edges are compatible by the same-centre
	rule. If $a=d\neq c_k$, then both sets contain $a$, contrary to their being
	disjoint.
	
	Now suppose that $a\neq d$. If neither $a$ nor $d$ equals $c_k$, then
	$R_s=\{a,b\}$ and $R_{s+1}=\{d,r\}$. Their disjointness and the absence of an
	edge between them show that $a,b,d,r$ are distinct and that all four cross
	edges are absent. Thus, $f_s$ and $f_{s+1}$ are compatible. If $a=c_k$, then
	$R_s=\{b\}$ and $R_{s+1}=\{d,r\}$. The assumption gives
	$b\neq d,r$ and $bd,br\notin E(G)$. Compatibility of $(d,r)$ with
	$(c_k,\ell_k)$ gives $c_k\neq d,r$ and $c_kd,c_kr\notin E(G)$. Again the
	two oriented edges are compatible. The case $d=c_k$ is symmetric.
	
	In every case, $f_s$ and $f_{s+1}$ would be compatible, contradicting their
	adjacency on $P$. Hence consecutive sets touch. Each $R_s$ induces a connected
	subgraph, and therefore $G[\bigcup_{s=0}^{t}R_s]$ is connected. It contains
	$\ell_i$ and $\ell_j$ and avoids $N_G[\ell_k]$. Thus, there is an
	$\ell_i$--$\ell_j$ path avoiding $N_G[\ell_k]$.
	
	The indices $i,j,k$ were arbitrary. Therefore,
	$\{\ell_1,\ell_2,\ell_3\}$ is an asteroidal triple in $G$. Hence, if $A_G$
	contains an asteroidal triple, then so does $G$, proving the contrapositive.
\end{proof}

\subsection{Algorithm and running time}

Given an AT-free graph $G$, construct $A_G$. By
Theorem~\ref{thm:at-preservation}, $A_G$ is AT-free, and by
Lemma~\ref{lem:eop-independent}, $\rhooe(G)=\alpha(A_G)$. We compute a maximum
independent set of $A_G$ using the algorithm of Broersma et
al.~\cite{BroersmaEtAl1999} and translate it back into an edge open packing of
$G$ as in the proof of Lemma~\ref{lem:eop-independent}.

\begin{theorem}
	\label{thm:atfree-algorithm}
	The \MEOP\ problem can be solved in $O(n^2+m^4)$ time on an AT-free graph with
	$n$ vertices and $m$ edges. In particular, the running time is $O(n^8)$ when
	expressed only in terms of $n$.
\end{theorem}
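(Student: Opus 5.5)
The plan is to follow the algorithm described just before the statement and account for the cost of each phase. The algorithm has three phases: construct $A_G$, run the maximum independent set algorithm of Broersma et al.~\cite{BroersmaEtAl1999} on it, and translate the independent set back into an edge open packing.

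\textbf{Correctness.} By Theorem~\ref{thm:at-preservation}, $A_G$ is AT-free, so the Broersma et al. algorithm applies to it and returns a maximum independent set $I$. By Lemma~\ref{lem:eop-independent} and its proof, the underlying edges of $I$ form an edge open packing of size $|I|=\alpha(A_G)=\rhooe(G)$.

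\textbf{Running time.} I will bound each phase in turn.

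\emph{Construction of $A_G$.} First build the adjacency matrix of $G$ in $O(n^2)$ time. List the $N=2m$ oriented edges. For each of the $O(m^2)$ pairs $(c,\ell),(d,r)$, decide compatibility with a constant number of equality tests and adjacency-matrix lookups: the same-centre rule needs one lookup of $\ell r$, and the disjoint rule needs four lookups. This gives $A_G$, with $N=2m$ vertices, in $O(n^2+m^2)$ time.

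\emph{Independent set computation.} The Broersma et al. algorithm runs in $O(N^4)=O(16m^4)=O(m^4)$ time on an AT-free graph with $N$ vertices.

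\emph{Translation back.} Dropping orientations takes $O(m)$ time.

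The total is $O(n^2+m^2+m^4+m)=O(n^2+m^4)$. Since $m\le\binom n2=O(n^2)$, we have $m^4=O(n^8)$, so the bound is $O(n^8)$ in terms of $n$ alone.

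\textbf{Main obstacle.} There is essentially none beyond checking that the cited bound is indeed $O(N^4)$ in the number of vertices $N$ of the input AT-free graph, and that the algorithm does not require additional input such as a representation. Both hold for the Broersma et al. algorithm, which needs only the graph. One small point to state explicitly: isolated vertices of $G$ do not contribute to $A_G$, but they are accounted for by the $O(n^2)$ term.
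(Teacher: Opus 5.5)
Your proposal is correct and follows the paper's proof essentially step for step. Correctness comes from Theorem~\ref{thm:at-preservation} and Lemma~\ref{lem:eop-independent}; $A_G$ is built in $O(n^2+m^2)$ time from an adjacency matrix with constant-time compatibility tests; the $O(N^4)$ algorithm of Broersma et al.\ is applied with $N=2m$; and the translation back takes $O(m)$ time, giving $O(n^2+m^4)=O(n^8)$. Your extra remarks, that the independent-set algorithm needs no representation and that isolated vertices are covered by the $O(n^2)$ term, are harmless details the paper leaves implicit.
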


\begin{proof}
	The graph $A_G$ has one vertex for each oriented edge of $G$, so
	$|V(A_G)|=2m$. An adjacency matrix of $G$ can be constructed in $O(n^2)$
	time, after which compatibility of two oriented edges can be tested in constant
	time. Thus, $A_G$ can be constructed in $O(n^2+m^2)$ time.
	
	The algorithm of Broersma et al.~\cite{BroersmaEtAl1999} solves \MIS\ on an
	$N$-vertex AT-free graph in $O(N^4)$ time. Applying it with $N=2m$ takes
	$O(m^4)$ time. Translating the independent set back into an edge open packing
	takes $O(m)$ time. The total running time is therefore $O(n^2+m^4)$, which is
	$O(n^8)$ because $m=O(n^2)$.
\end{proof}

	\section{Maximum Edge Open Packing in Chordal Graphs}
\label{sec:chordal}

In this section, we give a polynomial-time algorithm for chordal graphs. Let
$G=(V,E)$ be a chordal graph on $n$ vertices. If $G$ is disconnected, we solve
the problem independently on its connected components and add the optimum
values. We therefore assume throughout the description that $G$ is connected.

Let $\mathcal{T}$ be a clique tree of $G$. From $\mathcal{T}$, we construct a
nice rooted tree decomposition $(T,\mathcal{B})$ whose root and leaves have
empty bags and whose nodes are leaf, introduce, forget, or join nodes. Along
an edge between two maximal cliques, vertices are first forgotten down to the
intersection and then introduced up to the other clique. Join nodes use copies
of one maximal clique. Thus, every bag of the nice decomposition is a subset
of a maximal clique and is therefore a clique. The standard transformation
has $O(n^2)$ nodes~\cite{Kloks1994}.

For $t\in V(T)$, let $B_t$ be its bag, let $T_t$ be the subtree rooted at $t$,
and let $G_t$ be the subgraph of $G$ induced by all vertices occurring in bags
of $T_t$. We use the following separation properties. If $t$ is an introduce
node with child $t'$ and $B_t=B_{t'}\cup\{v\}$, then $v$ has no neighbour in
$V(G_{t'})\setminus B_{t'}$. If $t$ is a forget node with child $t'$ and
$B_t=B_{t'}\setminus\{v\}$, then $v$ has no neighbour outside $V(G_t)$. If
$t$ is a join node with children $t_1,t_2$, then
$V(G_{t_1})\cap V(G_{t_2})=B_t$, and there is no edge between
$V(G_{t_1})\setminus B_t$ and
$V(G_{t_2})\setminus B_t$.

\begin{observation}
	\label{obs:clique-two}
	\label{obs:forced-edge}
	Let $D$ be an edge open packing of $G$. Every clique of $G$ contains at most
	two vertices of $V(D)$, and every edge of $G[V(D)]$ belongs to $D$.
\end{observation}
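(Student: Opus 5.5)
The plan is to reduce both claims to Observation~\ref{obs:star-characterization}. Since $D$ is an edge open packing, that observation gives that every connected component of $G[V(D)]$ is a star and that $E(G[V(D)])=D$. The second claim of the statement, that every edge of $G[V(D)]$ belongs to $D$, is then exactly the equality $E(G[V(D)])=D$, so nothing further is needed for it.

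For the first claim, let $K$ be a clique of $G$ and suppose for contradiction that $|K\cap V(D)|\geq 3$. Choose three distinct vertices $x,y,z\in K\cap V(D)$. Because $K$ is a clique, the edges $xy,yz,xz$ all lie in $G$. Both endpoints of each of these edges are in $V(D)$, so all three edges lie in $G[V(D)]$, and $x,y,z$ span a triangle there. In particular they lie in a single connected component of $G[V(D)]$. A star $K_{1,t}$ is a tree and so contains no triangle, which contradicts Observation~\ref{obs:star-characterization}. Hence $|K\cap V(D)|\leq 2$.

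One could also argue directly from the definition: the triangle $xyz$ would have all three edges in $D$ by the forced-edge property, and then $xz$ would be an edge other than $xy$ and $yz$ joining an endpoint of one to an endpoint of the other. Routing through the observation is cleaner, though. There is no real obstacle; the only point needing care is that the triangle lies inside $G[V(D)]$, which holds because all three vertices are in $V(D)$ and $G[V(D)]$ is an induced subgraph.
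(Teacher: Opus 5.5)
Your proposal is correct and follows the paper's own argument. Both claims come from Observation~\ref{obs:star-characterization}: the equality $E(G[V(D)])=D$ gives the second assertion, and the fact that a star forest has no triangle gives the first.
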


Both assertions follow from Observation~\ref{obs:star-characterization}:
$G[V(D)]$ is a star forest, so it has no triangle, and all its edges are
selected.

\subsection{States}

At every node $t$, a state records the vertices of $B_t$ that are intended to
be endpoints of selected edges in the final packing, together with their roles
in the partial induced-star forest. We call these vertices \emph{active}. An
active vertex is either already incident with a selected edge in the processed
part or is pending and must become incident with a selected edge later. Since
$B_t$ is a clique, Observation~\ref{obs:clique-two} implies that a state has at
most two active vertices.

The possible states are the following.

\begin{itemize}
	\item The empty state $\varnothing$ means that no vertex of $B_t$ is active.
	
	\item A singleton state $(x,\lambda)$ has $x\in B_t$ and
	$\lambda\in\{P,C,L\}$. Label $P$ means that $x$ is pending and is not yet
	incident with a selected edge in the processed part. Label $C$ means that
	$x$ is the centre of a partial star with at least one edge. Label $L$ means
	that $x$ is a leaf of a star whose centre lies outside the bag and has at
	least two selected neighbours. In particular, a $K_2$ meeting the bag only
	at $x$ always has label $C$, irrespective of any earlier orientation.
	
	\item A pair state $(\{x,y\},\lambda)$ has distinct $x,y\in B_t$ and
	$\lambda\in\{E,S_x,S_y\}$. Since $B_t$ is a clique, $xy\in E(G)$. In every
	pair state, $xy$ is selected but has not yet been counted. Label $E$ means
	that $xy$ currently forms a $K_2$-component with no counted attachment at
	either endpoint. Label $S_x$ means that the partial star is centred at $x$
	with $y$ as a leaf and at least one counted edge from $x$ to a leaf outside
	the bag. Label $S_y$ is defined symmetrically.
\end{itemize}

A partial solution at $t$ is a set $U\subseteq V(G_t)$ such that
$H=G_t[U]$ is a disjoint union of stars with at least one edge, with the
possible addition of one isolated vertex in $B_t$. The active vertices are
exactly $U\cap B_t$, and every edge of $H$ is selected. The component meeting
the bag, if any, has the canonical label defined above; an isolated active
vertex has label $P$. There is at most one such component because the bag is
a clique. All components disjoint from the bag are complete.

For a state $\sigma$, the table entry $\operatorname{DP}[t,\sigma]$ is the
maximum number of edges of $H$ with at least one endpoint outside $B_t$
over partial solutions with state $\sigma$, or $-\infty$ if none exists.
Thus, the counting invariant is that every selected edge outside the bag
has been counted, while no edge entirely inside the bag has been counted.
The latter case occurs only for the edge $xy$ in a pair state; it is counted
when the first of $x,y$ is forgotten. Canonical labels depend on the current
partial star, so the two endpoints of a $K_2$ need not be assigned permanent
roles.

\subsection{Transitions}

All updates are maximum updates. Entries not produced by a listed transition
remain equal to $-\infty$.

\medskip
\noindent
\textbf{Leaf node.}
For a leaf $t$ with empty bag, set
$\operatorname{DP}[t,\varnothing]=0$.

\medskip
\noindent
\textbf{Introduce node.}
Let $t$ be an introduce node with child $t'$ and
$B_t=B_{t'}\cup\{v\}$. The vertex $v$ may be left inactive, so every child
state is copied to the same state at $t$ with the same value.
If $v$ becomes active, the possible updates are
\begin{align*}
	\operatorname{DP}[t,(v,P)]
	&\gets \max\{\operatorname{DP}[t,(v,P)],
	\operatorname{DP}[t',\varnothing]\},\\
	\operatorname{DP}[t,(\{x,v\},E)]
	&\gets \max\{\operatorname{DP}[t,(\{x,v\},E)],
	\operatorname{DP}[t',(x,P)]\},\\
	\operatorname{DP}[t,(\{x,v\},S_x)]
	&\gets \max\{\operatorname{DP}[t,(\{x,v\},S_x)],
	\operatorname{DP}[t',(x,C)]\},
\end{align*}
for every $x\in B_{t'}$ for which the child entry is finite. No other
transition that makes $v$ active is allowed. Activating $v$ from $(x,L)$
would give the leaf $x$ a second selected neighbour, while activating it from
a pair state would create three active vertices in a clique. No selected edge
is counted at an introduce node.

\medskip
\noindent
\textbf{Forget node.}
Let $t$ be a forget node with child $t'$ and
$B_t=B_{t'}\setminus\{v\}$. If $v$ does not occur in a child state $\sigma$,
copy the state unchanged. The state $(v,P)$ is rejected because $v$ would be
forgotten before becoming incident with a selected edge. For
$\lambda\in\{C,L\}$, the state $(v,\lambda)$ produces the empty state
at $t$ with the same value.

Now suppose that the child state is a pair state involving $v$ and another
vertex $x\in B_t$. The selected edge $vx$ is counted at this forget step. The
updates are
\begin{align*}
	\operatorname{DP}[t,(x,C)]
	&\gets
	\max\{\operatorname{DP}[t,(x,C)],
	\operatorname{DP}[t',(\{v,x\},E)]+1\},\\
	\operatorname{DP}[t,(x,L)]
	&\gets
	\max\{\operatorname{DP}[t,(x,L)],
	\operatorname{DP}[t',(\{v,x\},S_v)]+1\},\\
	\operatorname{DP}[t,(x,C)]
	&\gets
	\max\{\operatorname{DP}[t,(x,C)],
	\operatorname{DP}[t',(\{v,x\},S_x)]+1\}.
\end{align*}
The first update views the counted $K_2$ as a star centred at its remaining
endpoint. In the second update, $v$ is the forgotten centre and $x$ remains a
leaf; the positive attachment in $S_v$ ensures that $v$ has at least two
selected neighbours. In the third, $x$ remains the centre.

\medskip
\noindent
\textbf{Join node.}
Let $t$ be a join node with children $t_1,t_2$, where
$B_t=B_{t_1}=B_{t_2}$. Two child states are combined only when they activate
the same subset of $B_t$.

Define a partial merge operation $\star$ on labels. For the empty state,
$\varnothing\star\varnothing=\varnothing$. For singleton states on the same
vertex, the valid merges are
\[
\begin{array}{c@{\qquad}c@{\qquad}c}
	P\star P=P, & P\star C=C, & C\star P=C,\\
	P\star L=L, & L\star P=L, & C\star C=C.
\end{array}
\]
The merges $L\star L$, $C\star L$, and $L\star C$ are invalid. They would
create a path with three edges: a label $L$ already places the active vertex
at a leaf of a star with at least two edges, and the other nonpending child
adds a second neighbour to that leaf.

For pair states on the same two vertices $x,y$, the valid merges are
\[
\begin{array}{c@{\qquad}c@{\qquad}c}
	E\star E=E, & E\star S_x=S_x, & S_x\star E=S_x,\\
	E\star S_y=S_y, & S_y\star E=S_y, &
	S_x\star S_x=S_x,\\
	&&S_y\star S_y=S_y.
\end{array}
\]
The merges $S_x\star S_y$ and $S_y\star S_x$ are invalid because they would
place attachments at both endpoints of $xy$.

The join recurrence is
\[
\operatorname{DP}[t,\tau]
=
\max_{\sigma_1\star\sigma_2=\tau}
\bigl\{\operatorname{DP}[t_1,\sigma_1]
+\operatorname{DP}[t_2,\sigma_2]\bigr\},
\]
where undefined merges are ignored. No edge is counted at a join node. An
edge entirely inside the bag remains uncounted in both children, while every
selected edge with an endpoint outside the bag belongs to exactly one child
subtree and has already been counted there.

\subsection{Correctness}

The root bag is empty, so the only accepting state at the root $r$ is
$\varnothing$. The value returned by the algorithm is
$\operatorname{DP}[r,\varnothing]$.

\begin{lemma}
	\label{lem:chordal-sound}
	Every finite DP entry satisfies the invariant. Consequently, every accepting
	computation at the root represents an edge open packing whose cardinality is
	its DP value.
\end{lemma}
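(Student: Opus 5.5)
We argue by induction on the nice tree decomposition, processing nodes bottom-up. The claim to establish at each node $t$ is this: if $\operatorname{DP}[t,\sigma]=k$ is finite, then some partial solution $U\subseteq V(G_t)$ has canonical state $\sigma$ and exactly $k$ edges of $H=G_t[U]$ with an endpoint outside $B_t$. This is stronger than bounding $k$ from above, because the partial solution itself serves as the witness. For the leaf node the witness is $U=\varnothing$ with value $0$. For each other node type we take witnesses $U'$ (and $U_1,U_2$ at a join) for the child entries that realise the maximum. We then build $U$ and check three things: $H$ is a star forest, possibly with one isolated active vertex in the bag; its label is the canonical one; and the count is correct.

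\textbf{Introduce node.} Copying a state corresponds to $U=U'$. For an activating transition we take $U=U'\cup\{v\}$. By the separation property, $v$ has no neighbour in $V(G_{t'})\setminus B_{t'}$. Within $B_{t'}$, the only active vertex is $x$, or there is none. So the only new edge of $H$ is $vx$, or $v$ is isolated. From $(x,P)$, $x$ was isolated, so $\{x,v\}$ is a new $K_2$ component with label $E$. From $(x,C)$, $x$ is the centre of a star with at least one counted leaf outside the bag, so adding the leaf $v$ keeps a star centred at $x$ with label $S_x$. The new edge lies inside the bag and is left uncounted, so the value is unchanged.

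\textbf{Forget node.} Removing $v$ from the bag does not change $U$. The point to check is that $v$ has no neighbour outside $V(G_t)$, so its role is final. This is why $(v,P)$ must be rejected: otherwise $v$ would end up isolated. The other checks are as follows.
\begin{itemize}
\item When $(v,C)$ or $(v,L)$ becomes $\varnothing$, the component of $v$ is complete and no longer meets the bag.
\item In a pair state, $vx$ now has an endpoint outside $B_t$, so it is counted, which gives the $+1$.
\item The new label of $x$ is canonical. From $E$, $x$ is the centre of a $K_2$. From $S_v$, $x$ is a leaf of a star whose centre $v$ now lies outside the bag and has at least two selected neighbours. From $S_x$, $x$ remains the centre.
\end{itemize}

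\textbf{Join node.} This is the main obstacle. We set $U=U_1\cup U_2$. Since $V(G_{t_1})\cap V(G_{t_2})=B_t$, both children activate the same subset of $B_t$, and there are no edges between the private parts, every edge of $H$ lies inside $G_{t_1}$ or inside $G_{t_2}$. Components of the two witnesses that avoid the bag therefore stay separate, complete stars. The only component that interacts is the one meeting the bag, and there we must verify, label pair by label pair, that the valid merges yield a star with the stated canonical label.
\begin{itemize}
\item For singletons, $P$ is neutral. $C\star C$ gives a star centred at $x$ whose leaves are the union of the leaves from both sides, and these leaves are pairwise nonadjacent because there are no cross edges.
\item $P\star L$ keeps $x$ a leaf of a star whose centre lies in one child.
\item For pairs, the merges $E\star S_x$ and $S_x\star S_x$ give a star centred at $x$ containing $y$ together with leaves from both sides.
\item The excluded merges are exactly those that would create a $P_4$.
\end{itemize}
For the count, every edge with an endpoint outside $B_t$ lies in exactly one child, and bag edges are uncounted in both children. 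Hence the counts add.

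Finally, at the root the bag is empty, so the only accepting state is $\varnothing$ and $H=G[U]$ is a star forest with no isolated vertex. By Observation~\ref{obs:star-characterization}, $E(H)$ is an edge open packing. Every edge has been counted, since no edge can lie inside an empty bag, so its cardinality equals $\operatorname{DP}[r,\varnothing]$.
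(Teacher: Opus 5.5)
Your proposal is correct and follows the same route as the paper. Both argue by bottom-up induction over leaf, introduce, forget, and join nodes, using the separation property at introduce nodes, the rejection of $(v,P)$ and the $+1$ for the boundary edge at forget nodes, the absence of edges between $V(G_{t_1})\setminus B_t$ and $V(G_{t_2})\setminus B_t$ together with the merge table at join nodes, and Observation~\ref{obs:star-characterization} at the root; your witness formulation, a partial solution attaining exactly the DP value, simply makes explicit what the paper means by a finite entry satisfying the invariant.
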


\begin{proof}
	We argue by induction over the nice tree decomposition. The statement is
	immediate at a leaf.
	
	At an introduce node, an inactive introduced vertex does not change the
	partial solution. An active introduced vertex can interact only with active
	vertices already in the current bag, because it has no neighbour in
	$V(G_{t'})\setminus B_{t'}$. The transition from the empty state creates a
	pending vertex; the transition from $(x,P)$ selects $xv$ and creates a
	$K_2$-component; and the transition from $(x,C)$ attaches $v$ as a new leaf
	of the star centred at $x$. The excluded cases are precisely those that would
	give a leaf a second selected neighbour or place three active vertices in one
	clique. Thus, every introduce transition preserves the invariant.
	
	At a forget node, a pending singleton is rejected. If a singleton centre or
	leaf is forgotten, its component is complete; the separation property implies
	that the forgotten vertex has no neighbour outside $V(G_t)$. If $v$ belongs
	to a pair state with $x$, the edge $vx$ is counted once and the remaining role
	of $x$ is recorded correctly. Thus, all forget transitions preserve the
	invariant and count each selected edge exactly once.
	
	At a join node, put $X_i=V(G_{t_i})\setminus B_t$ for $i\in\{1,2\}$. There is
	no edge between $X_1$ and $X_2$, so no hidden common edge can be created
	between selected vertices lying strictly in different child subtrees. All
	interactions occur through the common bag. Requiring equal active subsets and
	using the listed merge table preserves exactly the compatible partial-star
	shapes. In particular, all attachments of a boundary edge lie at the same
	endpoint. No selected edge is counted twice.
	
	At the root, there is no pending vertex and no uncounted boundary edge. The
	selected vertices therefore induce a disjoint union of induced stars. By
	Observation~\ref{obs:star-characterization}, the selected edges form an edge
	open packing, and every selected edge has been counted exactly once.
\end{proof}

\begin{lemma}
	\label{lem:chordal-complete}
	Every edge open packing of $G$ is represented by the dynamic program.
\end{lemma}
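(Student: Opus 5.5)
Fix an edge open packing $D$ of $G$ and put $U=V(D)$. By Observation~\ref{obs:star-characterization}, $G[U]$ is a star forest whose edge set is exactly $D$. For each component with at least two edges, fix its unique centre. For each $K_2$-component we make no fixed choice; its labels will be assigned canonically, node by node. The plan is to define, for every node $t$, the \emph{restriction} $U_t=U\cap V(G_t)$ together with a state $\sigma_t$ read off from $D$. We then show by bottom-up induction over $(T,\mathcal{B})$ that $\sigma_t$ is produced by a listed transition from the children's states, and that $\operatorname{DP}[t,\sigma_t]\ge w_t$. Here $w_t$ is the number of edges of $D$ lying in $G_t$ with at least one endpoint outside $B_t$. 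At the root the bag is empty, so $\sigma_r=\varnothing$ and $w_r=|D|$, which gives $\operatorname{DP}[r,\varnothing]\ge|D|$.

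The state $\sigma_t$ is defined as follows. By Observation~\ref{obs:clique-two}, $U\cap B_t$ has at most two vertices.
\begin{itemize}
	\item If $U\cap B_t=\varnothing$, the state is $\varnothing$.
	\item If $U\cap B_t=\{x,y\}$, then $xy\in D$ because the bag is a clique. The label is $E$ if neither $x$ nor $y$ has a $D$-neighbour in $V(G_t)\setminus B_t$. It is $S_x$ if $x$ has such a neighbour, and $S_y$ symmetrically. These cases are exclusive, since a neighbour at both ends would create a path with three edges in the star forest.
	\item If $U\cap B_t=\{x\}$, the label is $P$ if $x$ has no $D$-neighbour in $G_t$.
	\item Otherwise the label is $L$ if $x$ is a leaf whose centre $c\in V(G_t)\setminus B_t$ has at least two $D$-neighbours inside $G_t$, and $C$ in all remaining cases. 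In the remaining cases $x$ is a centre, or the edges of its component inside $G_t$ form a single edge $xc$ that looks like a $K_2$ from within $G_t$.
\end{itemize}
Note that in the $L$ case all $D$-neighbours of $c$ already lie in $G_t$, since $c$ has been forgotten. So "at least two inside $G_t$" is the same as "$c$ is a genuine centre", which matches the definition of label $L$.

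The induction step is a case check at each node type; one also verifies that $U_t$ is a valid partial solution with state $\sigma_t$. At an introduce node with $v\notin U$, the state is copied. If $v\in U$, the separation property says $v$ has no neighbour in $V(G_{t'})\setminus B_{t'}$. Hence every $D$-edge at $v$ inside $G_t$ goes to the at most one other active bag vertex $x$. If there is no such $x$, we use the transition from $\varnothing$ to $(v,P)$. If $x$ exists, $\sigma_{t'}$ cannot be $(x,L)$, since $x$ would then be a leaf with two neighbours, and it cannot be a pair state, by the clique bound. If $\sigma_{t'}=(x,P)$ we get $E$. If $\sigma_{t'}=(x,C)$, then $x$ has a $D$-neighbour outside the bag, so $x$ is the centre of the star containing $xv$ and the new state is $S_x$, matching the definition. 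At a forget node, $v\in U$ cannot have label $P$: $v$ has no neighbours outside $V(G_t)$, yet it has a $D$-neighbour. For the pair states, we check that the three update rules produce exactly the canonical label of $x$ at $t$, and that the counter gains exactly the edge $vx$.

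The main obstacle is the join node, where we must check that $\sigma_t=\sigma_{t_1}\star\sigma_{t_2}$ with $w_t=w_{t_1}+w_{t_2}$. Write $X_i=V(G_{t_i})\setminus B_t$. The active subsets agree because both equal $U\cap B_t$. The edge counts add because the edges of $D$ with an endpoint outside $B_t$ split between $X_1$ and $X_2$, with no edges between these sets. For the labels, a singleton $x$ has $D$-neighbours in $X_1\cup X_2$.
\begin{itemize}
	\item If both sides are nonempty, $x$ has two neighbours and so is a centre; both children label it $C$, or $C$ and $P$.
	\item If $x$ is a leaf with centre in $X_1$, then side $2$ is pending, giving $L\star P$ or $C\star P$. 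The combination $C\star P=C$ arises exactly when $c$ has only the neighbour $x$, in which case the component is a $K_2$ labelled $C$. This is the subtle point: the canonical $K_2$ rule must be consistent across children. I would handle it by arguing directly from the definition that the label is determined by the degree pattern.
	\item For pairs, attachments at $x$ and at $y$ cannot both occur, which excludes $S_x\star S_y$.
\end{itemize}
Checking these cases completes the induction.
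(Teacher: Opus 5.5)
Your proposal is correct and follows essentially the same route as the paper's proof: you restrict $D$ to each $G_t$, read off canonical labels directly from $D$ (using $C$ for a partial $K_2$ because its forgotten endpoint has all its neighbours in $G_t$ and so cannot become a centre), and check node type by node type that these states arise from listed transitions and merges with the counted edges adding up correctly. You are slightly more explicit than the paper about the invariant $\operatorname{DP}[t,\sigma_t]\ge w_t$. The only blemish is the phrase ``or $C$ and $P$'' in the join case where $x$ has neighbours on both sides: there both children must give $C$, so the slip is harmless.
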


\begin{proof}
	Let $D$ be an edge open packing, and set $U_t=V(D)\cap V(G_t)$ at each
	node $t$. Every vertex in $V(G_t)\setminus B_t$ has no neighbour outside
	$V(G_t)$, by the separation property of a tree decomposition. Thus, each
	vertex of $U_t\setminus B_t$ has a selected neighbour in $U_t$. It follows
	that $G_t[U_t]$ is a star forest whose only possible isolated vertex lies
	in the bag. Observation~\ref{obs:clique-two} ensures that at most two
	vertices are active, and two active vertices form a selected edge.
	
	Assign each boundary component its canonical label. In particular, a
	boundary edge $xy$ receives label $E$ exactly when it has no attachment
	in the processed part; otherwise all attachments have the same centre,
	giving $S_x$ or $S_y$. If a singleton boundary vertex $x$ lies in a
	partial $K_2$ with its other endpoint $v$ outside the bag, label $C$ is
	sufficient. Indeed, all selected neighbours of $v$ already lie in
	$V(G_t)$. If $v$ were the centre of a larger global star, that star would
	therefore already be present in the processed part. Consequently, this
	$K_2$ either remains a $K_2$ globally or can grow only at $x$.
	More generally, the unique centre of a global star with at least two edges
	cannot be forgotten while one of its leaves remains outside the processed
	graph. When its centre is forgotten and a leaf remains in the bag, the
	partial component already has at least two edges and receives label $L$.
	
	At an introduce node, a newly active vertex is pending, is joined to a pending
	vertex to form a $K_2$, or is attached to a known centre. These are exactly the
	listed transitions. At a forget node, a pending vertex cannot occur in the
	restriction of $D$, because it has no selected neighbour in the processed
	graph and no neighbour outside it. The other forget transitions preserve
	the canonical label and count the edge between two active vertices, if
	present.
	
	At a join node, the two child partial solutions are restrictions of the same
	global packing and therefore activate the same bag vertices. For a singleton,
	a leaf with label $L$ in one child has no selected neighbour in the other
	child and hence meets label $P$ there. Otherwise the labels belong to
	$\{P,C\}$, whose merges are all listed. For a boundary edge $xy$, all
	counted attachments lie at the same endpoint, so its pair labels also
	combine by a valid merge. Induction now gives a DP computation representing
	$D$ and counting each of its edges exactly once.
\end{proof}

Lemmas~\ref{lem:chordal-sound} and~\ref{lem:chordal-complete} imply that
$\operatorname{DP}[r,\varnothing]=\rhooe(G)$.

\subsection{Running time and reconstruction}

Let $b_t=|B_t|$. A table contains one empty state, $3b_t$ singleton states,
and $3\binom{b_t}{2}$ pair states, hence $O((b_t+1)^2)$ states. Leaf,
introduce, and forget nodes can be processed in $O((b_t+1)^2)$ time.

At a join node, states are grouped by their active subset. There is one label
for the empty subset, three labels for each singleton subset, and three labels
for each two-element subset. Thus, only constantly many label pairs are
examined for each active subset, and a join node also takes
$O((b_t+1)^2)$ time.
The nice decomposition has $O(n^2)$ nodes and $b_t\leq n$, so the total
running time is $O(n^4)$.

For a disconnected graph with component orders $n_1,\ldots,n_q$, the total
running time is $O(\sum_i n_i^4)\subseteq O(n^4)$. To reconstruct an optimum
packing, store a predecessor transition for every improving DP update and
backtrack from the root. This does not change the asymptotic time bound.

\begin{theorem}
	\label{thm:chordal-algorithm}
	\label{thm:chordal-correct}
	For a chordal graph $G$ on $n$ vertices, the dynamic program computes
	$\rhooe(G)$ and an optimum edge open packing in $O(n^4)$ time.
\end{theorem}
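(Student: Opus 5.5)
The theorem combines the two correctness lemmas with the running-time analysis. I would prove it in three steps: correctness of the value, the running time, and reconstruction.

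\textbf{Correctness of the value.} First, I will check that $\operatorname{DP}[r,\varnothing]=\rhooe(G)$ for a connected chordal graph. Lemma~\ref{lem:chordal-sound} says every accepting computation at the root represents an edge open packing whose cardinality equals its DP value, so $\operatorname{DP}[r,\varnothing]\le\rhooe(G)$. Conversely, let $D$ be a maximum edge open packing. By Lemma~\ref{lem:chordal-complete}, some DP computation represents $D$ and counts each of its edges exactly once, and all updates are maximum updates. Hence $\operatorname{DP}[r,\varnothing]\ge|D|=\rhooe(G)$. For a disconnected $G$, I apply this to each component. The endpoint sets of packings in different components have no edges between them, so the union of packings is a packing and $\rhooe$ is additive over components.

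\textbf{Running time.} Next, I will assemble the timing bound from the preceding subsection. A clique tree of a chordal graph can be computed in linear time, for instance through a perfect elimination ordering obtained by lexicographic breadth-first search. Converting it to a nice decomposition with $O(n^2)$ nodes is standard~\cite{Kloks1994}. Each node has $O((b_t+1)^2)=O(n^2)$ states. Leaf, introduce and forget transitions are constant work per state.

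For join nodes, the key step is grouping the states by active subset. Each subset of size at most two carries at most three labels, so only $O(1)$ label pairs are merged per subset. I must also argue that this grouping costs no more than $O(b_t^2)$, for example by indexing tables directly by the subset. Summing $O(n^2)$ work over $O(n^2)$ nodes gives $O(n^4)$. Over components the total is $O(\sum_i n_i^4)\subseteq O(n^4)$ by convexity.

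\textbf{Reconstruction.} Finally, I will store for each improving update a pointer to the predecessor entry or entries. Backtracking from $(r,\varnothing)$ retraces one accepting computation, which costs time linear in the number of nodes visited. Along the way I collect the edges counted at forget steps: pair-state edges when the first endpoint is forgotten, and edges to vertices outside the bag when they were counted. By the soundness lemma, these edges form an edge open packing of size $\operatorname{DP}[r,\varnothing]$.

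\textbf{Main obstacle.} The delicate point is the completeness direction. The DP canonicalises labels, so the computation constructed in Lemma~\ref{lem:chordal-complete} realises the optimum without double counting; for example, a $K_2$ is relabelled $C$ after one endpoint is forgotten. I would rely on that lemma as stated. The only extra thing to verify is that the backtracked computation yields an actual edge set rather than just a count. This holds because each counted increment at a forget node corresponds to an identified edge $vx$, or to an edge incident with the forgotten vertex, which is recorded through the state transitions.
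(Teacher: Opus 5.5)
Your proposal is correct and follows the paper's argument. The value $\operatorname{DP}[r,\varnothing]=\rhooe(G)$ comes from Lemmas~\ref{lem:chordal-sound} and~\ref{lem:chordal-complete}, the $O(n^4)$ bound from $O(n^2)$ nodes with $O((b_t+1)^2)$ work each (joins grouped by active subset), and reconstruction from stored predecessor transitions. As a minor point, this DP counts edges only when the first endpoint of a pair-state edge is forgotten, so your extra clause about ``edges to vertices outside the bag'' is redundant but harmless.
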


\section{Maximum Edge Open Packing in Convex Bipartite Graphs}
\label{sec:convex}

Let $G=(X\cup Y,E)$ be a convex bipartite graph. Isolated vertices do not
affect an edge open packing and are deleted first; if no edge remains, the
optimum is zero. Let $X=\{x_1,x_2,\ldots,x_p\}$ be a convex ordering. For
each $y\in Y$, write
$N(y)=\{x_{\ell(y)},x_{\ell(y)+1},\ldots,x_{r(y)}\}$.

By Observation~\ref{obs:star-characterization}, the components of
$G[V(D)]$ for an edge open packing $D$ are stars $K_{1,t}$ with $t\geq1$.
In particular, $K_2$ is included. Its centre may be chosen in either part;
when describing an existing packing, we choose its endpoint in $X$.
Since $X$ and $Y$ are independent sets, a collection of vertex-disjoint
stars forms an edge open packing exactly when there is no edge between
two distinct stars.

\subsection{Separation properties}

For a component $S$ of $G[V(D)]$, define
$x_{\min}(S)=\min\{i:x_i\in V(S)\cap X\}$. Every component contains an
edge, so this minimum is defined. List the components as $S_1,\ldots,S_k$
in increasing order of $x_{\min}$; their minima are distinct because their
vertex sets are disjoint. A \emph{prefix} is
$\mathcal S_j=\{S_1,\ldots,S_j\}$ for some $0\leq j\leq k$, with
$\mathcal S_0=\varnothing$. A component $S_t$ is \emph{later} than this
prefix if $t>j$.

\begin{claim}
\label{clm:convex-order}
If $a<b$, every $X$-vertex of $S_a$ has index smaller than
$x_{\min}(S_b)$.
\end{claim}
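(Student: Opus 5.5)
The claim follows directly from the star structure of the components together with the consecutive-neighbourhood property of the ordering. We argue by contradiction. Put $\alpha=x_{\min}(S_a)$ and $\beta=x_{\min}(S_b)$, so $\alpha<\beta$ because $a<b$. Suppose some $X$-vertex $x_i$ of $S_a$ has $i\geq\beta$. Since $S_a$ and $S_b$ are vertex-disjoint and $x_\beta\in V(S_b)$, we have $i\neq\beta$, and hence $\alpha<\beta<i$. In particular, $x_\alpha$ and $x_i$ are two distinct $X$-vertices of $S_a$.

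Next we use the shape of $S_a$. By Observation~\ref{obs:star-characterization}, $S_a$ is a star $K_{1,t}$, and it is bipartite with respect to $(X,Y)$. If its centre lies in $X$, then every other vertex of $S_a$ is a leaf in $Y$, so $S_a$ contains only one $X$-vertex. This covers a $K_2$ oriented with its $X$-endpoint as centre, as in the convention above. Such a star cannot contain both $x_\alpha$ and $x_i$. Therefore the centre of $S_a$ is some $y\in Y$, and $x_\alpha,x_i\in N(y)$.

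Now we apply convexity. The neighbourhood $N(y)=\{x_{\ell(y)},\ldots,x_{r(y)}\}$ is consecutive, so $\ell(y)\leq\alpha<\beta<i\leq r(y)$. Hence $x_\beta\in N(y)$, that is, $yx_\beta\in E(G)$. But $y\in V(S_a)$ and $x_\beta\in V(S_b)$ lie in distinct components of $G[V(D)]$. An edge of $G$ between them would be an edge of $G[V(D)]$ joining two different components, which is impossible; this is the remark preceding the separation properties that there is no edge between two distinct stars. This contradiction proves the claim.

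I do not expect a real obstacle here. The only point needing care is excluding the case where the centre of $S_a$ lies in $X$, including the $K_2$ case. The chosen convention of taking the $X$-endpoint as centre handles this, and in any case a $K_2$ has only one $X$-vertex regardless of orientation.
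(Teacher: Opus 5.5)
Your proof is correct and follows essentially the same route as the paper: an $X$-centred star contains only one $X$-vertex, and for a $Y$-centred star, convexity of $N(y)$ would force an edge from the centre to $x_{\min}(S_b)$, while equality of indices is excluded by vertex-disjointness. Your version just makes explicit the chain $x_{\min}(S_a)<x_{\min}(S_b)<i$ that the paper compresses into ``$X$-vertices on both sides of index $i$''.
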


\begin{proof}
Put $i=x_{\min}(S_b)$. An $X$-centred star contains just one $X$-vertex.
Otherwise, if $S_a$ contained $X$-vertices on both sides of index $i$,
convexity would make its $Y$-centre adjacent to $x_i$, joining two distinct
components. Equality of indices is excluded by vertex-disjointness.
\end{proof}

For a prefix $\mathcal S$, define
$\alpha(\mathcal S)=\max\{i:x_i\in V(S)\cap X\text{ for some }S\in\mathcal S\}$
and
$\beta(\mathcal S)=\max\{r(y):y\in V(S)\cap Y\text{ for some }S\in\mathcal S\}$,
taking both maxima to be zero for the empty prefix. We always have
$\alpha(\mathcal S)\leq\beta(\mathcal S)$: each selected $X$-vertex has a
selected $Y$-neighbour whose right endpoint is at least its index.

\begin{lemma}
\label{lem:convex-separation}
Let $\mathcal S$ be a prefix, and put $\alpha=\alpha(\mathcal S)$ and
$\beta=\beta(\mathcal S)$. In every later component, each $X$-vertex
$x_i$ satisfies $i>\beta$, and each $Y$-vertex $y$ satisfies
$\ell(y)>\alpha$.
\end{lemma}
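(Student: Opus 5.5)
The plan is to argue directly from Claim~\ref{clm:convex-order} and convexity. In each of the two cases, a violation of the stated bound would force an edge of $G$ between two distinct components of $G[V(D)]$. That is impossible, because by Observation~\ref{obs:star-characterization} every edge of $G[V(D)]$ belongs to $D$ and lies inside a single star. If the prefix is empty, then $\alpha=\beta=0$, and both bounds hold trivially since every index is at least $1$. So assume $\mathcal S=\mathcal S_j$ with $j\geq1$, and let $S_t$ be a later component, so $t>j$.

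\emph{$Y$-vertices.} Let $y\in V(S_t)\cap Y$ and suppose, for contradiction, that $\ell(y)\leq\alpha$.
\begin{enumerate}[label=(\arabic*)]
\item Choose $S_s\in\mathcal S$, with $s\leq j<t$, containing $x_\alpha$.
\item Since $S_t$ is a star and $Y$ is independent, $y$ has a neighbour $x_c\in V(S_t)\cap X$.
\item By Claim~\ref{clm:convex-order} applied to $s<t$, we have $\alpha<x_{\min}(S_t)\leq c\leq r(y)$.
\item Hence $\ell(y)\leq\alpha<r(y)$, and convexity gives $x_\alpha\in N(y)$.
\end{enumerate}
This is an edge between $S_s$ and $S_t$, a contradiction.

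\emph{$X$-vertices.} Let $x_i\in V(S_t)\cap X$ and suppose $i\leq\beta$.
\begin{enumerate}[label=(\arabic*)]
\item Choose $y'$ in some $S_s\in\mathcal S$ with $r(y')=\beta$.
\item Again $y'$ has a neighbour $x_c$ in its own star $S_s$.
\item By Claim~\ref{clm:convex-order}, $c<x_{\min}(S_t)\leq i$.
\item Hence $\ell(y')\leq c<i\leq\beta=r(y')$, and convexity gives $x_iy'\in E(G)$.
\end{enumerate}
This again joins two distinct components, a contradiction.

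The only delicate point is ensuring that each relevant vertex has a neighbour inside its own star that can serve as an anchor on the correct side. This holds because stars have at least one edge and $G$ is bipartite. The strictness of the inequalities comes from Claim~\ref{clm:convex-order}, which in turn relies on vertex-disjointness of the components. Everything else is the direct convexity interval argument.
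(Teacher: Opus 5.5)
Your proof is correct and follows the paper's argument: both bounds come from Claim~\ref{clm:convex-order} together with the convexity interval argument, and your $X$-vertex case matches the paper's essentially verbatim. The only minor difference is in the $Y$-vertex case: you obtain $\alpha<c$ directly from Claim~\ref{clm:convex-order}, whereas the paper proves the $X$-vertex bound first and reuses it as $c>\beta\geq\alpha$.
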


\begin{proof}
The empty-prefix case is immediate. Suppose that a later selected vertex
$x_i$ satisfies $i\leq\beta$. A previously selected $Y$-vertex $y$ has
$r(y)=\beta$ and a selected neighbour $x_h$ in its component.
Claim~\ref{clm:convex-order} gives $h<i$, so
$\ell(y)\leq h<i\leq r(y)$. Convexity implies $yx_i\in E(G)$, a
contradiction. Thus $i>\beta\geq\alpha$.

If a later selected $Y$-vertex $z$ had $\ell(z)\leq\alpha$, its selected
neighbour $x_i$ would satisfy $\ell(z)\leq\alpha<i\leq r(z)$. It would
therefore be adjacent to the previously selected vertex $x_\alpha$,
again a contradiction.
\end{proof}

\subsection{Dynamic programming table}

A \emph{labelled partial solution} consists of an edge open packing $D'$
together with barriers $(\alpha,\beta)$. Its components are generated from
left to right by the transitions below. The value $\alpha$ is the exact
index of its rightmost selected $X$-vertex, with $\alpha=0$ only for the
empty solution. The recorded barrier $\beta$ is an upper bound on $r(y)$
for every selected $y\in Y$. It may exceed their actual maximum; such a
conservative barrier restricts the permitted continuation. In every
reachable state $\alpha\leq\beta$. The continuation rule requires every
future selected $X$-vertex to have index greater than $\beta$ and every
future selected $Y$-vertex $y$ to satisfy $\ell(y)>\alpha$.

For $0\leq\alpha,\beta\leq p$, let $F[\alpha,\beta]$ be the maximum
number of edges in a reachable labelled partial solution with these
barriers, or $-\infty$ if none exists. Future choices depend only on the
barriers, so retaining the largest value for each pair is sufficient.
Initialize $F[0,0]=0$ and all other entries to $-\infty$. Each transition
strictly increases $\beta$: its new value is either $q\geq i>\beta$ or
$r(y)\geq q\geq s>\beta$. The states can therefore be processed in
increasing order of $\beta$.

\subsection{Transitions}

We append one new star component to a reachable partial solution.

\medskip
\noindent
\textbf{$X$-centred star.}
Choose a centre $x_i$ with $i>\beta$ and a new right-endpoint barrier
$q\in\{i,\ldots,p\}$. Define
\[
w_X(i,\alpha,q)
=\bigl|\{y\in Y:\alpha<\ell(y)\leq i\leq r(y)\leq q\}\bigr|.
\]
If $w_X(i,\alpha,q)>0$, select all these vertices as leaves and update
\[
F[i,q]\gets\max\{F[i,q],F[\alpha,\beta]+w_X(i,\alpha,q)\}.
\]
The condition $i>\beta$ excludes edges from the new centre to previously
selected $Y$-vertices, and $\ell(y)>\alpha$ excludes edges from new
leaves to previously selected $X$-vertices. The updated barriers require
future $X$-vertices to lie beyond $q$ and future $Y$-vertices to have left
endpoint greater than $i$; hence the new star is also separated from all
permitted future components. The recorded value $q$ need not be attained
by a selected leaf, which is why conservative barriers are allowed.

\medskip
\noindent
\textbf{$Y$-centred star.}
Choose a centre $y\in Y$ with $\ell(y)>\alpha$, and put
$s=\max\{\ell(y),\beta+1\}$. If $s\leq r(y)$, then for every
$q\in\{s,\ldots,r(y)\}$ select $x_s,\ldots,x_q$ as leaves and update
\[
F[q,r(y)]\gets\max\{F[q,r(y)],F[\alpha,\beta]+q-s+1\}.
\]
The new leaves have indices greater than $\beta$, so they are
nonadjacent to previously selected $Y$-vertices. The condition
$\ell(y)>\alpha$ separates the new centre from previously selected
$X$-vertices. Future $Y$-vertices have left endpoint greater than $q$,
and future $X$-vertices have index greater than $r(y)$, so neither can
have an edge to the new star.

The algorithm returns $\max_{0\leq\alpha,\beta\leq p}F[\alpha,\beta]$.

\subsection{Correctness}

\begin{lemma}
\label{lem:convex-sound}
Every solution constructed by the dynamic program is an edge open packing.
\end{lemma}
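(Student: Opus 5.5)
We argue by induction on the number of transitions used to reach a state, and we prove a slightly stronger invariant than the statement. A reachable state $(\alpha,\beta)$ together with its constructed set $D'$ satisfies three properties.
\begin{enumerate}[label=(\alph*)]
\item $D'$ is a union of vertex-disjoint stars with at least one edge each, with no edge of $G$ between distinct stars, and each star is induced (no extra edges inside its vertex set).
\item Every selected $X$-vertex has index at most $\alpha$, and every selected $Y$-vertex $y$ has $r(y)\le\beta$.
\item $\alpha\le\beta$.
\end{enumerate}
At the end, (a) together with Observation~\ref{obs:star-characterization} (the components of $G[V(D')]$ are exactly these stars) shows that $D'$ is an edge open packing. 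The base state $F[0,0]$ with $D'=\varnothing$ satisfies everything trivially.

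\textbf{$X$-centred star.} For the inductive step in this case, the new component is $x_i$ together with the set $L=\{y:\alpha<\ell(y)\le i\le r(y)\le q\}$, which is nonempty. It is an induced star: $G$ is bipartite, so the leaves of $L$ are pairwise nonadjacent, and each $y\in L$ is adjacent to $x_i$ since $\ell(y)\le i\le r(y)$. Its vertices are new, because $i>\beta\ge\alpha$ and, for each $y\in L$, $\ell(y)>\alpha$ while old $Y$-vertices $z$ have $r(z)\le\beta<i\le r(y)$, so $y$ could be old only if $r(y)\le\beta$, which is false. For separation we check the two possible cross edge types.
\begin{itemize}
\item An old $Y$-vertex $z$ versus $x_i$: here $r(z)\le\beta<i$, so $z$ is not adjacent to $x_i$.
\item A new leaf $y$ versus an old $X$-vertex $x_h$: here $h\le\alpha<\ell(y)$, so they are not adjacent.
\end{itemize}
The new barriers are also valid. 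The new $\alpha$ equals $i$, and every old $X$-index is at most $\alpha<i$. Each leaf has $r(y)\le q$, and old $r$-values are at most $\beta<i\le q$. Finally $i\le q$ gives (c).

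\textbf{$Y$-centred star.} This case is symmetric. The leaves $x_s,\dots,x_q$ lie in $N(y)$ because $\ell(y)\le s\le q\le r(y)$. The centre $y$ is new: old $Y$-vertices $z$ satisfy $\ell(z)\le$ (index of some old $X$-vertex) $\le\alpha<\ell(y)$. The leaves are new because their indices exceed $\beta\ge\alpha$. For separation:
\begin{itemize}
\item An old $Y$-vertex $z$ has $r(z)\le\beta<s$, so it is not adjacent to any new leaf.
\item An old $x_h$ has $h\le\alpha<\ell(y)$, so it is not adjacent to $y$.
\end{itemize}
The new barriers are $(q,r(y))$. Every old $X$-index is at most $\alpha<s\le q$, and every old $r$-value is at most $\beta<s\le r(y)$. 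We also have $q\le r(y)$.

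The main point requiring care is that the barrier $\beta$ may be conservative, i.e.\ strictly larger than the true maximum $r$-value. The invariant is therefore phrased with inequalities rather than equalities, and every separation argument only uses the direction ``old values $\le$ barrier $<$ new value''. Invariant (b) is exactly what the transitions' guards ($i>\beta$, $\ell(\cdot)>\alpha$, $s\ge\beta+1$) need.
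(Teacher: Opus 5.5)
Your proof is correct and follows the same route as the paper's. You induct over the sequence of transitions, check that each appended star is induced, uses only fresh vertices, and has no edge to earlier selected vertices (via the guards $i>\beta$, $\ell(\cdot)>\alpha$, $s\geq\beta+1$ and the barrier bounds), and then conclude with Observation~\ref{obs:star-characterization}. Your invariant (b), stated with inequalities, is just an explicit version of the paper's terse ``stated barrier invariants'' and correctly covers the case where $\beta$ is conservative.
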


\begin{proof}
Both transitions add a star whose vertices have no edges to earlier
selected vertices, as shown above, and preserve the stated barrier
invariants. In particular, no previously selected vertex can be reused:
a previous $X$-vertex has index at most $\alpha\leq\beta$, while a
previous $Y$-vertex has a selected neighbour of index at most $\alpha$
and therefore has left endpoint at most $\alpha$.
Inductively, the selected vertices induce disjoint stars, and
Observation~\ref{obs:star-characterization} applies.
\end{proof}

\begin{lemma}
\label{lem:convex-complete}
For every edge open packing $D$, the dynamic program constructs a solution
of cardinality at least $|D|$.
\end{lemma}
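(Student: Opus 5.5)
We argue by induction along the left-to-right ordering $S_1,\ldots,S_k$ of the components of $G[V(D)]$. We show that for each $j$ there is a reachable state $(\alpha_j,\beta_j)$ satisfying three conditions. First, $F[\alpha_j,\beta_j]\geq|E(\mathcal S_j)|$. Second, $\alpha_j=\alpha(\mathcal S_j)$. Third, $\beta_j\leq\beta(\mathcal S_j)$, or more precisely $\beta_j$ is at most the largest index $i$ that Lemma~\ref{lem:convex-separation} must exclude for later $X$-vertices; the weaker requirement is that every later $X$-vertex has index greater than $\beta_j$. The base case is $F[0,0]=0$ with the empty prefix. The return value is a maximum over all entries, so reaching $j=k$ gives the claim. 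Throughout we use the convention that a $K_2$ component is treated as $X$-centred.

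For the inductive step, fix a state $(\alpha,\beta)$ for $\mathcal S_{j}$ with the three conditions, and let $S=S_{j+1}$. By Lemma~\ref{lem:convex-separation}, its $X$-vertices have index greater than $\beta(\mathcal S_j)\geq\beta$, and its $Y$-vertices $y$ satisfy $\ell(y)>\alpha$. We then split into two cases.

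\emph{$X$-centred case.} Let $S$ have centre $x_i$ and set $q=\max\{r(y):y\in V(S)\cap Y\}\geq i$. Every leaf $y$ of $S$ satisfies $\alpha<\ell(y)\leq i\leq r(y)\leq q$, so $w_X(i,\alpha,q)\geq|E(S)|>0$. The transition therefore gives $F[i,q]\geq F[\alpha,\beta]+|E(S)|$. The new $\alpha$ is exactly $i=\alpha(\mathcal S_{j+1})$. This holds because, by Claim~\ref{clm:convex-order}, $i$ exceeds all earlier $X$-indices. The new barrier $q$ is at most $\beta(\mathcal S_{j+1})$. The DP star may contain extra leaves, but the count is only a lower bound. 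The key point is that the DP solution need not equal $D$: the invariant concerns only the barriers and the value. Soundness and the future restrictions depend on the barriers alone, since the later components of $D$ satisfy the continuation rule with respect to $(i,q)$ by Lemma~\ref{lem:convex-separation}.

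\emph{$Y$-centred case.} Let $S$ have centre $y$ with at least two leaves. Its leaves are consecutive. This follows from Claim~\ref{clm:convex-order} and convexity: any $x$ between two leaves is adjacent to $y$, and a selected $x$ there would belong to $S$. However, they need not begin at $s=\max\{\ell(y),\beta+1\}$. Let the leaves be $x_a,\ldots,x_q$ with $a\geq s$; here $a>\beta$ by the invariant. The transition selects $x_s,\ldots,x_q$, which is a superset of the leaves, giving $F[q,r(y)]\geq F[\alpha,\beta]+q-a+1$. The new $\alpha$ equals $q=\alpha(\mathcal S_{j+1})$, and the new barrier is $r(y)\leq\beta(\mathcal S_{j+1})$.

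The main obstacle is justifying the inequality $\beta_j\leq\beta(\mathcal S_j)$ together with exactness of $\alpha_j$. Specifically, we must check that Lemma~\ref{lem:convex-separation} is applied with the true prefix barriers. We also need that the DP's enlarged stars do not invalidate later steps. They do not, because later steps read only $(\alpha,\beta)$, and the true barriers dominate the recorded ones. Therefore every later component of $D$ still satisfies $i>\beta_j$ and $\ell(y)>\alpha_j$, which are the only preconditions the transitions check.
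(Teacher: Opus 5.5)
Your proposal is correct and follows essentially the paper's argument. You process $S_1,\ldots,S_k$ in order, apply the $X$-centred transition with parameters $(i,q)$ and the $Y$-centred transition with endpoint $q$, and track the prefix barriers (which in fact remain exact). Your remark that compatibility of the added leaves with later components of $D$ need not be checked is a legitimate simplification of the paper's extra verification, because transitions test only $(\alpha,\beta)$ and validity is already given by Lemma~\ref{lem:convex-sound}.

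One small correction: the leaves of a $Y$-centred star need not be consecutive. For example, in $K_{1,3}$ with centre $y$, the packing $\{yx_1,yx_3\}$ is valid. This does not break your argument, since the bound only needs every leaf to lie in $\{x_s,\ldots,x_q\}$, which holds because each leaf index is at least $\max\{\ell(y),\beta+1\}$ and at most $q$.
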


\begin{proof}
Process $S_1,\ldots,S_k$ in their defined order, possibly adding leaves.
We maintain the exact barriers of the corresponding original prefix and
keep the enlarged prefix compatible with the unprocessed components.
Initially both barriers are zero. Suppose the current barriers are
$\alpha,\beta$, and $S$ is the next component.
Lemma~\ref{lem:convex-separation} gives index greater than $\beta$ for
every $X$-vertex of $S$, and left endpoint greater than $\alpha$ for
every $Y$-vertex of $S$.

If $S$ is centred at $x_i\in X$, let
$q=\max\{r(y):y\in V(S)\cap Y\}$. Every original leaf is counted by
$w_X(i,\alpha,q)$, so the corresponding transition selects at least
$|E(S)|$ edges. An added leaf has left endpoint greater than $\alpha$
and right endpoint at most $q$, excluding edges to all earlier selected
$X$-vertices. Applying Lemma~\ref{lem:convex-separation} to the original
prefix including $S$ shows that every later selected $X$-vertex has
index greater than $q$, excluding edges to these vertices as well.
An added leaf cannot already belong to a later component, since its edge
to $x_i$ would join that component to $S$. Since an original leaf attains
$q$ and $q\geq i>\beta$, the enlarged prefix has exact barriers $(i,q)$,
the same as the original prefix including $S$.

If $S$ is centred at $y\in Y$, put
$q=\max\{h:x_h\in V(S)\cap X\}$ and
$s=\max\{\ell(y),\beta+1\}$. All original leaves lie in
$x_s,\ldots,x_q$, so the transition again selects at least $|E(S)|$
edges. Any added leaf has index greater than $\beta$, excluding edges
to previously selected $Y$-vertices. Applying
Lemma~\ref{lem:convex-separation} to the original prefix including $S$
shows that every later selected $Y$-vertex $z$ satisfies $\ell(z)>q$.
Thus none of the added leaves has an edge to a later selected
$Y$-vertex. Claim~\ref{clm:convex-order} also ensures that none is a
selected $X$-vertex of a later component. The exact barriers are
$(q,r(y))$, unchanged from those of the original prefix including $S$,
since $x_q$ is an original leaf and $r(y)\geq q>\beta$.

Both transitions therefore preserve the induction invariant. Their
sequence gives a reachable value at least $|D|$.
\end{proof}

\subsection{Running time and reconstruction}

Deleting isolated vertices and extracting interval endpoints from a given
convex ordering takes $O(n+m)$ time. If an ordering is not supplied, it
can be found in $O(n+m)$ time by consecutive-ones testing
\cite{BoothLueker1976}.

The table has $(p+1)^2=O(p^2)$ entries. Compute
$C[a,b]=|\{y\in Y:\ell(y)=a,\ r(y)=b\}|$ for $1\leq a\leq b\leq p$,
with zero entries elsewhere. A two-dimensional prefix-sum table gives
$w_X(i,\alpha,q)=\sum_{a=\alpha+1}^{i}\sum_{b=i}^{q}C[a,b]$ in $O(1)$
time. This preprocessing takes $O(p^2+|Y|)$ time and space.
From one state there are $O(p^2)$ $X$-centred transitions and
$O(|Y|p)$ $Y$-centred transitions. The total time is therefore
$O(p^4+p^3|Y|)=O(n^4)$.

For each improving update, store its predecessor state, transition type,
centre, and endpoint $q$. Backtracking identifies at most $p$ transitions,
since $\beta$ strictly increases. Recover the leaves of each $X$-centred
star by scanning $Y$, and those of each $Y$-centred star by listing its
selected interval. Reconstruction takes $O(p|Y|+p^2)$ time. The tables,
interval endpoints, and predecessor records use $O(p^2+|Y|)$ auxiliary
space once the ordering is available; including the input and ordering
preprocessing, $O(n^2)$ space suffices.

\begin{theorem}
\label{thm:convex-correct}
\label{thm:convex-algorithm}
The \MEOP\ problem can be solved in $O(n^4)$ time and $O(n^2)$ space on
convex bipartite graphs.
\end{theorem}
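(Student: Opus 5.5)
The plan is to assemble Theorem~\ref{thm:convex-correct} from the results already established in this section: correctness comes from the two lemmas, and the bounds come from the counting in the running-time subsection. The first step is correctness. The algorithm returns $\max_{\alpha,\beta}F[\alpha,\beta]$. By Lemma~\ref{lem:convex-sound}, every finite entry is realised by a labelled partial solution whose edge set is an edge open packing. The value of that entry is exactly the number of selected edges, since each transition adds precisely the counted edges of one new star. Hence the returned value is at most $\rhooe(G)$. Conversely, I apply Lemma~\ref{lem:convex-complete} to a maximum packing $D$; this gives a reachable entry of value at least $|D|=\rhooe(G)$. I also need to justify that keeping only the best value per pair $(\alpha,\beta)$ loses nothing. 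This holds because the admissibility of every transition depends only on the barriers, so any sequence of transitions can be replayed from the stored best partial solution.

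The second step is the time bound. I will confirm that the table is filled in a valid order. Every transition strictly increases $\beta$, so processing states by increasing $\beta$ means each state is final before it is expanded. I then sum the work. Preprocessing consists of removing isolated vertices, finding a convex ordering by consecutive-ones testing in $O(n+m)$ time, and building the prefix sums of $C[a,b]$ in $O(p^2+|Y|)$ time. There are $O(p^2)$ states. Each state has $O(p^2)$ $X$-centred transitions, each evaluated in $O(1)$ time via prefix sums, and $O(|Y|p)$ $Y$-centred transitions. This totals $O(p^4+p^3|Y|)$, and since $p,|Y|\le n$ the bound is $O(n^4)$. Reconstruction by backtracking adds only $O(p|Y|+p^2)$ time.

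The third step is the space bound. The tables $F$, $C$, their prefix sums, and the predecessor records each have $O(p^2)$ entries. The interval endpoints take $O(|Y|)$ space. Storing the input and running the ordering preprocessing fit in $O(n+m)=O(n^2)$ space, so the total is $O(n^2)$.

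I expect the only delicate point to be the exchange argument in the first step: conservative barriers must never block a continuation that an exact barrier would allow. This is already covered by the proof of Lemma~\ref{lem:convex-complete}, which tracks the exact barriers of the original prefix. I would simply point to that invariant rather than reprove it.
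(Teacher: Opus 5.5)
Your proposal is correct and follows the paper's proof: the paper takes the upper bound $\le\rhooe(G)$ from Lemma~\ref{lem:convex-sound} and the matching lower bound from Lemma~\ref{lem:convex-complete}, and it takes the $O(p^4+p^3|Y|)=O(n^4)$ time and $O(n^2)$ space bounds from the preceding running-time analysis. Your remarks on processing states in increasing order of $\beta$ and on keeping only the best value per barrier pair make explicit what the paper already states when it defines the table.
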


\begin{proof}
Lemma~\ref{lem:convex-sound} bounds the returned value from above by
$\rhooe(G)$, and Lemma~\ref{lem:convex-complete} gives the reverse
inequality. The preceding analysis establishes the time and space bounds
and the reconstruction of an optimum packing.
\end{proof}

\section{Conclusion}
\label{sec:conclusion}

We proved that the oriented star-conflict graph of Das and
Santra~\cite{DasSantra2026} preserves AT-freeness and used this property
to obtain an $O(n^2+m^4)$-time algorithm for \MEOP\ on AT-free graphs.
We also gave $O(n^4)$-time dynamic programs for chordal and convex
bipartite graphs. The former handles unbounded clique number, while the
latter extends the earlier biconvex algorithm.

Natural next questions are whether these running times can be improved
and whether \MEOP\ admits a polynomial-time algorithm on chordal
bipartite graphs. Distance-hereditary graphs are already covered by
Bhyravarapu et al.~\cite{BhyravarapuEtAl2026}. It is also useful to study
further classes for which the construction $G\mapsto A_G$ preserves a
structure that makes maximum independent set tractable.

For general graphs, fixed-parameter tractability by treewidth follows
from monadic second-order optimisation~\cite{CourcelleMosbah1993}:
feasibility of a chosen edge set is expressed by excluding two selected
edges together with a third edge joining their endpoints, and the
objective is its cardinality. Thus, the questions concern explicit
algorithms, their dependence on treewidth, and kernelisation bounds.
An independent set $S$ is asteroidal if $S\setminus\{v\}$ lies in one
component of $G-N_G[v]$ for every $v\in S$; its maximum size is the
asteroidal number. The preservation proof maps an asteroidal set of $A_G$ to an
asteroidal set of $G$ by taking the leaf coordinates: the same path
argument applies to any two members while avoiding a third. Combined
with the bounded-asteroidal-number result of Broersma
et al.~\cite{BroersmaEtAl1999}, this gives polynomial-time solvability
for each fixed bound on asteroidal number. Whether the dependence can
be made fixed-parameter tractable is a more precise further question.

\bibliographystyle{plain}% the mandatory bibstyle
\bibliography{EOP_ref}

\end{document}